\newcommand{\CC}{{\mathbb C}}
\newcommand{\RR}{{\mathbb R}}
\newcommand{\TT}{{\mathbb T}}
\newcommand{\Cc}{{\mathcal{C}}}
\newcommand{\Dc}{{\mathcal{D}}}
\newcommand{\Ec}{{\mathcal{E}}}
\newcommand{\Fc}{{\mathcal{F}}}
\newcommand{\Hc}{{\mathcal{H}}}
\newcommand{\Kc}{{\mathcal{K}}}
\newcommand{\Lc}{{\mathcal{L}}}
\newcommand{\Mc}{{\mathcal{M}}}
\newcommand{\Nc}{{\mathcal{N}}}
\newcommand{\Oc}{{\mathcal{O}}}
\newcommand{\Pc}{{\mathcal{P}}}
\newcommand{\Qc}{{\mathcal{Q}}}
\newcommand{\Vc}{{\mathcal{V}}}
\newcommand{\ualpha}{\underline{\alpha}^c}
\newcommand{\ualphao}{\underline{\alpha}}
\newcommand{\oalpha}{\overline{\alpha}^c}
\newcommand{\oalphao}{\overline{\alpha}}
\newcommand{\bg}{{\bm g}}  
\newcommand{\biG}{{\bm G}}
\newcommand{\bih}{{\bm h}}
\newcommand{\biH}{{\bm H}}
\newcommand{\bik}{{\bm k}}
\newcommand{\biv}{{\bm v}}
\newcommand{\bix}{{\bm x}}
\newcommand{\bip}{{\bm p}}
\newcommand{\biN}{{\bm N}}
\newcommand{\biP}{{\bm P}}
\newcommand{\biQ}{{\bm Q}}
\newcommand{\fA}{{\mathfrak A}}
\newcommand{\fB}{{\mathfrak B}}
\newcommand{\fG}{{\mathfrak G}}
\newcommand{\fP}{{\mathfrak P}}
\newcommand{\fQ}{{\mathfrak Q}}
\newcommand{\Fun}{\Fc}  
\newcommand{\Pzw}{\Pc}  
\newcommand{\Qzw}{\Qc}  
\newcommand{\Alg}{\fA}                   
\newcommand{\Group}{\fG}
\newcommand{\Test}{\Dc}                  
\newcommand{\Lag}{\Lc}
\newcommand{\Hil}{\Hc}                   
\newcommand{\Reg}{\Oc}                   
\newcommand{\Jp}[1]{J_+^{\, c} #1}
\newcommand{\Jm}[1]{J_-^{\, c} #1}
\newcommand{\Jmp}[1]{J_\mp^{\, c} #1}
\newcommand{\Jcap}[1]{J_{\, \cap}^{\, c} #1}
\newcommand{\Jcup}[1]{J_{\, \cup}^{\, c} #1}
\newcommand{\perpc}{\overset{c}{\perp}}
\newcommand{\tp}{{\scriptscriptstyle +}}
\newcommand{\tm}{{\scriptscriptstyle -}}
\newcommand{\Kcc}{\Kc^{\, c}}
\newcommand{\csucc}{\overset{c}{\succ}}
\newcommand{\supp}{{\mbox{supp} \, }}
\newcommand{\be}{\begin{equation}}
\newcommand{\ee}{\end{equation}}
\def\ie{{\it i.e.\ }}
\def\viz{{\it viz.\ }}
\newcommand{\const}{\mathrm{const}}
\journalname{Annales Henri Poincar\'e}
\begin{document}

\title{Dynamical C*-algebras and kinetic perturbations}


\author{Detlev Buchholz        \and
        Klaus Fredenhagen 
}


\institute{Detlev Buchholz \at
              Mathematisches Institut, Universit\"at G\"ottingen \\
              Bunsenstr.\ 3-5, 37073 G\"ottingen, Germany \\
              \email{detlev.buchholz@mathematik.uni-goettingen.de} \\
           \and
           Klaus Fredenhagen \at
           II. Institut f\"ur Theoretische Physik, Univerist\"at Hamburg \\
           Luruper Chaussee 149, 22761 Hamburg, Germany \\
           \email{klaus.fredenhagen@desy.de}
}

\date{Received: date / Accepted: date}

\maketitle

\begin{abstract}
The framework of dynamical C*-algebras for 
  scalar fields in Mink\-ow\-ski space, based on
  local scattering operators, is extended
  to theories with locally perturbed kinetic terms. 
  These terms encode information about the underlying spacetime
  metric, so the causality relations between the 
  scattering operators have to be adjusted accordingly. It is
  shown that the extended algebra describes scalar quantum fields,  
  propagating in locally deformed Minkowski spaces.
  Concrete representations of the abstract scattering operators, inducing
  this motion, are known to exist on Fock space. The proof that these
  representers also satisfy the generalized causality relations requires,
  however, novel arguments of a cohomological nature. 
  They imply that Fock space representations of the extended dynamical
  C*-algebra exist, involving linear as well as kinetic and
  pointlike quadratic perturbations of the field.
  
  \keywords{dynamical C*-algebras \and kinetic perturbations \and
  causal phases}
\end{abstract}

\section{Introduction}

We continue here our construction of
dynamical C*-algebras for scalar quantum fields in Minkowski
space \cite{BF19}. These algebras are generated by unitary operators
$S(F)$, where $F$ denotes some real functional acting on the underlying
classical field. The classical field is described by
real, smooth functions
$x \mapsto \phi(x)$ on $d$-dimensional Minkowski space
$\Mc \simeq \RR^d$, 
and the functionals considered in \cite{BF19} were of the specific form
\be \label{e.1.1}
F[\phi] \doteq -  \sum_{j = 0}^k (1/j!) \! \int \! dx \, g_j(x) \, \phi(x)^j \, .
\ee
Here $g_j \in \Test(\Mc)$ are real test functions on $\Mc$ with compact
supports. The term for $j=0$ denotes the constant functional.
These functionals are interpreted as
perturbations of the underlying
Lagrangian by point like self interactions of the field.
Their support (in the sense of functionals) is defined as union of
the supports of the underlying test functions~$g_j\ $ for $j>0$;
the constant functional (corresponding to $j=0$) has empty support
and hence can be placed everywhere. The unitaries $S(F)$ 
are the scattering operators corresponding
to the perturbations $F$. As was shown in \cite{BF19}, 
they satisfy for a given Lagrangian a dynamical
relation, based on the Schwinger-Dyson equation, as well as the
causal factorization rule
\be \label{e.1.2}
S(F + G) S(G)^{-1} S(G+H) = S(F + G + H) \, .
\ee
This relation holds whenever the spacetime support of $F$ 
succeeds the support of $H$ with regard to the
Minkowski metric. The support of the functional $G$,
having the preceding special form, is completely arbitrary.

\medskip
In the present article we consider also
localized perturbations of the kinetic
part of the underlying Lagrangians. This is of interest if one
thinks of perturbations of the theory by gravitational forces. But
it also provides a basis for the discussion of symmetry properties
of the theory, related to Noether's theorem. The
corresponding functionals $P$ are quadratic in the 
partial derivatives of the underlying field, 
\be \label{e.1.3} 
P[\phi] 
\doteq (1/2) \int \! dx \,
\partial_\mu \phi(x) \, p^{\mu \nu}(x) \  \partial_\nu \phi(x) \, .
\ee 
Here $x \mapsto p^{\cdot \, \cdot}(x) $ are smooth
functions with compact support, regarded 
as the support of $P$, which have values in the space of real,
symmetric $d \times d$ matrices. As we shall see, 
these functions have to comply with further constraints
in order to admit a meaningful interpretation as kinetic
perturbations. To avoid the discussion of the special situation in
two dimensions, we assume $d>2$.

\medskip 
Given admissible functionals  
$P$ of this kind, we consider the corresponding
scattering operators $S(P)$. Whereas the respective dynamical relations
remain unaffected, the causal factorization rule needs
to be adapted to the particular choice of~$P$. This
can be understood if one takes into account that the 
unitary operators $F \mapsto S(P)^{-1} S(F + P)$
describe scattering processes, induced by 
functionals $F$ of the preceding types,
which evolve under the perturbed dynamics  
with perturbation given by $P$. Thus if
the functional $P$ is of kinetic type, this
scattering process effectively takes place in a 
locally distorted Minkowski space whose causal
structure, fixed by $P$, enters in
the factorization rules. Yet 
operators $S(P), S(Q)$, assigned to functionals
having their supports in spacelike separated regions of Minkowski
space, still commute. By arguments given
in \cite{BF19}, this extended family of operators therefore 
generates local nets of C*-algebras in
Minkowski space, complying with all Haag-Kastler
axioms \cite{HaKa}.

\medskip 
We will study in more detail the subalgebra of
the dynamical C*-algebra,
which is generated by scattering operators assigned to functionals
of the classical field as well as its 
kinetic and quadratic point like perturbations.
This algebra describes quantum fields in locally distorted
Minkowski spaces, which satisfy corresponding field
equations and commutation relations. We will also 
exhibit some algebraic relations between
the field and the underlying scattering operators.

\medskip 
These results enter in our construction of representations of 
this algebra on Fock space. 
In this construction we make use of the known fact that the unitary 
scattering operators associated with kinetic perturbations can be
represented on Fock space~\cite{Wald}. Yet the phase factors
of these operators remained ambiguous in that analysis.
They matter, however, for the proof that there is a 
choice such that the resulting operators satisfy the  
causal factorization relations. In order to
establish this fact, we develop arguments akin to cohomology theory. 
The existence of Fock representations of
the dynamical C*-algebra, generated by the field and its quadratic 
perturbations, is thereby established. 

\medskip 
Our article is organized as follows. In the subsequent section
we introduce notions from classical field theory  
and discuss the form of admissible kinetic perturbations. Section~3
contains the definition of the extended dynamical C*-algebra and
remarks on some
of its general properties. In Sec.~4 we study the subalgebra generated
by the field and its quadratic 
kinetic as well as point like perturbations and determine its
algebraic structure. These results are used in Sec.~5 in an   
analysis of representations of the scattering operators
and of their products on Fock space.
The ambiguities left open in the phase factors are discussed
in Sec.~6; there it is shown that, for some coherent choice of these
factors, the scattering operators satisfy the causal
factorization rules and thus define a representation of the
C*-algebra on Fock space. The article concludes with
a brief outlook and a technical appendix. 

\section{Classical field theory}

We adopt the notation used in \cite{BF19} and adjust it to the 
more general setting, considered here. As already
mentioned, we proceed from a classical scalar field on $d$-dimensional 
Minkowski space $\Mc \simeq \RR^d$ with its standard metric
$\eta(x,x) = x_0^2 - \bix^2$, where $x_0, \bix$ denote the time and
space components of $x \in \RR^d$. 
The field is described by real, smooth functions
$x \mapsto \phi(x)$, which constitute its  
configuration space $\Ec$. 
The Lagrangian density of a non-interacting field
with mass $m \geq 0$ is given by
\be \label{e.2.1}
x \mapsto \Lag_0(x)[\phi] =  1/2 \,
(\partial_\mu \phi(x) \, \eta^{\mu \nu} \, \partial_\nu \phi(x) 
- m^2 \, \phi(x)^2 ) \, .
\ee
Its spacetime integral (if defined) is the corresponding Lagrangian action.
The passage to fields which are subject to interaction,    
as given in \eqref{e.1.1} or \eqref{e.1.3}, is accomplished by
adding to this Lagrangian the respective densities.

\medskip 
On the configuration space $\Ec$ of the field acts the 
additive group $\Ec_0$ of deformations, described by 
test functions $\phi_0 \in \Test(\Mc)$.
Their action on the affine space $\Ec$ is given by local shifts
of the field, 
$\phi \mapsto \phi + \phi_0$. With their help one 
defines variations of the Lagrangian action functionals, given by  
\be \label{e.2.2}
\delta \Lag(\phi_0)[\phi]
\doteq \int \! dx \, \big(\Lag(x)[\phi + \phi_0] - \Lag(x)[\phi]\big) \, .
\ee
These variations are well defined for local Lagrangians
and arbitrary fields $\phi$ in view of the
compact support of $\phi_0$. Their stationary points 
define the solutions of the classical field equation for
the given Lagrangian (``on shell fields''). 

\medskip
In case of the non-interacting Lagrangian \eqref{e.2.1}, 
the corresponding on shell field satisfies the Klein-Gordon equation.
If one adds to this Lagrangian the
densities of a kinetic perturbation $P$ as in
\eqref{e.1.3} and of a quadratic perturbation  $F_2$ 
with potential $g_2 = q$ as in \eqref{e.1.1},
the resulting field equation reads 
\be \label{e.2.3} 
\partial_\mu \, (\eta^{\mu \nu} + p^{\mu \nu}(x)) \, \partial_\nu \, \phi(x)
+ (m^2 + q(x)) \phi(x) = 0 \, .
\ee
We restrict our attention here to perturbations $P$ for which 
this equation describes the propagation of the field $\phi$ on a globally
hyperbolic spacetime with metric $g_P$. This metric is, up to a factor, 
the inverse of the principal symbol \cite{Hoer} of the
underlying differential operator, $x \in \Mc$,
\be \label{e.2.4}
|\mathrm{det}g_P(x)|^{-1/2}g_P(x) \doteq  (\eta + p(x))^{-1} \, .
\ee
In order to simplify the discussion of the causal
factorization relations, we restrict our attention to
metrics $g_P$ for which (a) the constant time planes of
Minkowski space for some fixed time coordinate are
Cauchy surfaces and (b) the time coordinate is  
positive timelike with regard to all of these metrics.
As is shown in the appendix, it amounts to the following condition.

\medskip \noindent
\textbf{Standing assumption:} The coefficients
$p^{\mu \nu}(x)$, $\mu, \nu = 0, \dots , d-1$,
\mbox{$x \in \Mc$}, of the kinetic
perturbations $P$ are smooth functions with compact support
which satisfy
\begin{itemize}
\item[(i)] $1 + p^{00}(x) > 0$,

\vspace*{1mm}
\item[(ii)]
  the matrix 
  $\, \delta^{ij} + p^{ij}(x) \,$
is positive definite, $\, i,j = 1, \dots , d-1$. 
\end{itemize}
The family of kinetic perturbations satisfying this condition
is, for each \mbox{$x \in \Mc$}, convex and stable under scalings by 
positive numbers which are bound\-ed by~$1$, cf.\ the appendix.
It is also invariant under spacetime translations. 
In view of the choice of a distinguished time coordinate
underlying its definition, it is, however,
not Lorentz invariant. We will return to this point in
subsequent sections. 

\section{The extended dynamical algebra}

The functionals $F: \Ec \rightarrow \RR$ considered
in this section contain, in addition to point like interactions as 
in equation \eqref{e.1.1}, kinetic perturbations~\eqref{e.1.3} 
with properties specified in the standing assumption.
The family of these functionals is denoted by $\Fun$. 
Whereas $\Fun$ is, in general, not stable under addition, we will
deal with special pairs and triples of functionals in $\Fun$ for which
all (partial) sums satisfy the standing assumption. Such
tuples will be termed \textit{admissible}. 

\medskip 
Apart from the spacetime localization 
of the functionals, fixed by the supports of the underlying
test functions, we must also take into account their impact 
on the causal structure of spacetime.
For $P \in \Fun$, this structure is determined by the metric $g_P$,
which is fixed by the kinetic part of~$P$ according to
equation \eqref{e.2.4}. Given any region $\Oc \subset \RR^d$,
we denote by $J^P_\pm(\Oc)$ the causal future, respectively past,
of $\Oc$ with regard to $g_P$.
In case of the Minkowski metric, $P = 0$, we 
write $J_\pm^{\, 0}(\Oc)$.

\medskip 
Given an admissible triple $P, Q, N \in \Fun$, we say that 
$P$ succeeds $Q$ with regard to (the metric induced by)~$N$ if 
$\mbox{supp} \, P$ does not intersect the past cone of  
$\mbox{supp} \, Q$, determined by $g_N$, \ie 
$\mbox{supp} \, P \cap J_-^N(\mbox{supp} \, Q) = \emptyset$. 
In this case we write $P \underset{N}{\succ} Q$.  
In particular, $P \underset{0}{\succ} Q$
means that $P$ succeeds $Q$ in Minkowski space.
Note that $\succ$ is \textit{not} an ordering relation,
in particular it is not transitive. 
Based on these notions, we can proceed now to 
an extension of the dynamical algebras, introduced in \cite{BF19},
by adding  to them the kinetic perturbations. 
As in \cite{BF19}, we
begin by defining a dynamical group, generated
by symbols $S(P)$, $P \in \Fun$, which are subject to two
relations. These relations involve a given Lagrangian $\Lag$,
the corresponding relative action~\eqref{e.2.2}, and
shifts of the functionals $P$ by elements $\phi_0 \in \Ec_0$,
denoted by $P^{\phi_0}[\phi] \doteq P[\phi + \phi_0]$, $\phi \in \Ec$. 
Compared to \cite{BF19}, we employ here a somewhat
simplified ``on shell'' version of this group.

\medskip 
\noindent \textbf{Definition:} \ Given a local Lagrangian $\Lag$
on Minkowski space $\Mc$, the corresponding
dynamical group $\Group_\Lag$ is the free group generated by 
elements $S(P)$, $P \in \Fun$, with  $S(0) = 1$, modulo the relations
\begin{itemize}
\item[(i)] $S(P) = S(P^{\phi_0} + \delta \Lag(\phi_0))$ \ for \
  $P \in \Fun$, $\phi_0 \in \Ec_0 \,$,
\item[(ii)] $S(P + N) S(N)^{-1} S(Q + N) = S(P + Q + N)$
\ for any admissible triple $P,Q,N \in \Fun$ such that $P$ succeeds 
$Q$ with regard to $N$, \ $P \underset{N}{\succ} Q$. 
\end{itemize} 

\noindent \textbf{Remark:}  If one puts $N=0$ in the second
condition, one obtains the causality relation
$S(P) S(Q) = S(P +Q)$ if $P$ succeeds $Q$ with regard
to the Minkowski metric. Thus if 
$P$, $Q$ have spacelike separated supports in Minkowski
space, then also $S(Q) S(P) = S(Q + P)$ and the
operators commute. 

\medskip 
A thorough discussion of the origin and interpretation of these
relations is given in \cite{BF19,BF20}. 
The only difference with regard to the present framework
appears in relation (ii), where the
impact of the kinetic functionals on the causal structure
of spacetime is taken into account. 

\medskip
The passage from the dynamical group $\Group_\Lag$
to a corresponding C*-algebra  
is accomplished by standard arguments, cf.~\cite{BF19}.
One regards 
the elements of $\Group_\Lag$ as basis of some complex vector
space $\Alg_\Lag$; the product in $\Alg_\Lag$ is inherited from
$\Group_\Lag$ by the distributive law, and the
*-operation can be defined such that the generating
elements $S(P)$ become unitary operators.
The resulting *-algebra has faithful states
and thus can be equipped with a (maximal) C*-norm.
Its completion defines the dynamical C*-algebra $\Alg_\Lag$ 
for given Lagrangian $\Lag$ and generating operators  
$S(P)$, $P \in \Fun$, describing local operations on the
underlying system. 

\medskip
A distinguished role is played by the constant functionals
which, for $c \in \RR$, are given by $c[\phi] \doteq c$, $\phi \in \Ec$.
Their support is empty, hence
\be
S(c) S(P) = S(c + P) = S(P) S(c)
\ee
by the causality condition (ii). So $c \mapsto S(c)$ 
defines a unitary group in the center of $\Alg_\Lag$. As in
\cite{BF19}, we fix its scale and put $S(c) = e^{ic} \, 1$, $c \in \RR$. 

\medskip
In a similar manner, one can define extended dynamical
algebras for theories on arbitrary globally hyperbolic
spacetimes. There the admissible kinetic perturbations
need to be adjusted to the underlying metric.
We restrict our attention here to Minkowski space and its
local deformations, inherited from functionals in $\Fun$.
For given $M \in \Fun$, these perturbations can still
be described by unitary operators in the algebra~$\Alg_\Lag$.
As brought to light by Bogoliubov \cite{BP57,BS59}, they  
are defined by  
\be \label{e.3.1} 
S_M(P) \doteq S(M)^{-1} S(M + P) \, , \quad P \in \Fun \, .
\ee
One easily verifies that these operators also satisfy the
two defining relations of some dynamical algebra. 
In the first relation, 
the Lagrangian $\Lag$ is to be replaced by $\Lag_M$, \ie 
the Lagrangian obtained from $\Lag$ by adding to it
the density inherent in $M$. The factorization equation in
the second relation is satisfied for
admissible quadruples $P,Q,N,M \in \Fun$,
provided $P$ succeeds $Q$ with regard to
$(M + N)$, \ie $P \underset{(M + N)}{\succ} Q$. 

\section{Quadratic perturbations}

We take from now on as dynamical input 
the algebra $\Alg$ for the Lagrangian~$\Lag_0$, cf.~\eqref{e.2.1}, 
omitting in the following the subscript $\Lag_0$. In fact, we 
are primarily interested in its 
subalgebra $\Alg_2 \subset \Alg$, which is
generated by unitaries $S(P)$ with functionals
$P \in \Pzw$, where $\Pc \subset \Fun$ denotes the family
of functionals which are at most quadratic in the 
underlying field and satisfy our
standing assumption; its subset of genuine quadratic functionals is 
denoted by $\Qzw$. 
As we shall see, the algebra $\Alg_2$ 
comprises non-interacting quantum fields, propagating in
locally deformed Minkowski spaces. 

\medskip
We adopt the notation used in \cite{BF19}. Thus
$K \doteq -(\partial_\mu \, \eta^{\mu \nu} \, \partial_\nu + m^2)$
is the negative Klein-Gordon operator,
$\Delta_R$ and $\Delta_A$ are the corresponding
retarded and advanced propagators, their difference
$\Delta = (\Delta_R - \Delta_A)$ is the commutator
function, and $\Delta_D = (1/2) (\Delta_R + \Delta_A)$
is the Dirac propagator. Further below, we will
also introduce perturbed versions of these entities. 

\medskip
As in \cite{BF19}, we consider perturbations involving linear functionals of 
the fields $\phi \in \Ec$, given by 
\be \label{e.4.1} 
F_f[\phi] = L_f[\phi] + (1/2) \, \langle f, \Delta_D f \rangle \, ,
\quad f \in \Test(\Mc) \, .
\ee
Here $L_f[\phi] \doteq \int \! dx \, f(x) \, \phi(x)$ and
$\langle f, g \rangle \doteq \int \! dx \, f(x) \, g(x)$
are constant functionals, where $f,g$ are smooth
functions whose pointwise product  $fg$ is compactly supported. 
It was shown in \cite{BF19} that the unitary operators
\be \label{e.4.2} 
W(f) \doteq S(F_f)
= S(L_f)    \, e^{(i/2) \langle f, \Delta_D f \rangle}
\in \Alg_2 \, , \quad f \in \Test(\Mc) \, ,
\ee
have the algebraic properties of Weyl operators on Minkowski space.
In particular, 
\be
W(Kf) = 1 \, , \ \ 
W(f) W(g) = e^{-(i/2) \langle f, \Delta g \rangle} W(f + g) \, ,  \ \ 
f,g \in \Test(\Mc) \, .
\ee
So these operators can be interpreted as
exponential functions of a quantum field,
which satisfies the
Klein-Gordon equation and has c-number commutation relations given
by the commutator function $\Delta$. 

\medskip
Next, we compute the product of Weyl operators
with arbitrary elements of the
full algebra $\Alg$. The result is stated in the
following lemma. There we make use again of the shift of functionals
by elements of~$\Ec_0$. As a matter of fact, taking
advantage of the support properties of the functionals, these shifts
are canonically extended in the lemma to a larger family
of smooth functions. 

\begin{lemma} \label{l.4.1}
Let $P \in \Fun$ and let $f \in \Test(\Mc)$. Then
\begin{enumerate}
\item[(i)]  $W(f) S(P) = S(F_f + P^{\, \Delta_R f})$, \quad
  $S(P) W(f) = S(F_f + P^{\, \Delta_A f})$
\item[(ii)] $W(f) S(P) W(f)^{-1} = S(P^{\, \Delta f})$. 
\end{enumerate}  
The condition of associativity does not  
entail further relations for multiple products of Weyl
operators with operators $S(P)$. 
\end{lemma}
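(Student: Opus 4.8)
The plan is to derive everything from the two defining relations of the dynamical group, specialized to the non-interacting Lagrangian $\Lag_0$. The key observation is that the linear functional $F_f$ from \eqref{e.4.1} is essentially a ``group element implementing a shift'': applying relation (i) of the Definition with Lagrangian $\Lag_0$ to $F_f$, one has $\delta \Lag_0(\phi_0)[\phi] = L_{-K\phi_0}[\phi] + (\text{const})$, so that $S(F_f)$ is invariant under the substitution $f \mapsto f + K\phi_0$ accompanied by the appropriate shift of the remaining part; this is exactly the content encoded in $W(Kf)=1$. The idea is that $W(f)$ behaves, under the group relations, like the ``scattering operator of a pure shift by $\Delta_R f$ (or $\Delta_A f$)'', and composing it with $S(P)$ should reproduce $S$ of the shifted functional plus a Weyl phase.

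Concretely, for part (i) I would start from the causal factorization relation (ii) of the Definition. Write $f = f_+ + f_-$ where $f_\pm$ are supported in the future/past of $\supp P$ respectively — more precisely, choose a splitting of $f$ adapted to a Cauchy surface separating a neighborhood of $\supp P$; then $F_{f_+}$ succeeds $P$ and $P$ succeeds $F_{f_-}$ in Minkowski space (here $N = 0$). Using $S(F_{f_+}) S(P) = S(F_{f_+} + P)$ and $S(P) S(F_{f_-}) = S(P + F_{f_-})$ from relation (ii), together with the Weyl composition law already established in \eqref{e.4.2} to reassemble $W(f) = W(f_+)W(f_-)$ up to a known phase, one reduces the claim to the case where $f$ is supported entirely to the future of $\supp P$, and separately to the past. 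In the ``future'' case $W(f)S(P) = S(F_f + P)$; one then has to recognize $\Delta_R f = 0$ on a neighborhood of $\supp P$ in that situation (since $f$ is in the future), so $P^{\Delta_R f} = P$ there, and the identity matches. The general formula $W(f)S(P) = S(F_f + P^{\Delta_R f})$ then follows by using relation (i) of the Definition to move from $P$ to $P^{\phi_0}$ with $\phi_0$ chosen so that $K\phi_0$ absorbs the part of $f$ that is not in the future; the bookkeeping of the quadratic constant $\tfrac12\langle f,\Delta_D f\rangle$ versus the shift-generated constants has to come out right, and this is the routine but delicate part. The second formula in (i) is obtained symmetrically with past replaced by future and $\Delta_R$ by $\Delta_A$.

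Part (ii) is then immediate: by (i),
\[
W(f) S(P) W(f)^{-1} = S(F_f + P^{\Delta_R f}) \, W(f)^{-1},
\]
and applying the second formula of (i) to the unitary $W(f)^{-1} = W(-f)$ acting from the right on $S(P^{\Delta_R f})$ — after peeling off the constant functional $F_f$, which is central and cancels against $W(f)^{-1}$'s phase — yields $S\big((P^{\Delta_R f})^{\Delta_A(-f)}\big) = S(P^{\Delta_R f - \Delta_A f}) = S(P^{\Delta f})$, using that the shift action of $\Ec_0$ (extended to the relevant smooth functions) is additive in the shift parameter and $\Delta = \Delta_R - \Delta_A$. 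The final sentence — that associativity imposes no further relations — is checked by verifying that the three ways of bracketing $W(f)\,W(g)\,S(P)$ all produce the same functional: this amounts to the cocycle-type identity that the phase $\langle f,\Delta g\rangle$ and the composition of shifts $\Delta_R(f+g) = \Delta_R f + \Delta_R g$ are consistent, which holds because $\Delta_R$ is linear and the Weyl phase is bilinear and antisymmetric.

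The main obstacle I expect is the careful splitting argument and the attendant constant-functional bookkeeping: one must choose the decomposition $f = f_+ + f_-$ so that both halves have the correct causal relation to $\supp P$ \emph{and} track how the quadratic Dirac-propagator phase $\tfrac12\langle f,\Delta_D f\rangle$ in the definition of $F_f$ redistributes into $\tfrac12\langle f_+,\Delta_D f_+\rangle$, $\tfrac12\langle f_-,\Delta_D f_-\rangle$, the cross term $\langle f_+,\Delta_D f_-\rangle$, and the Weyl phase $-\tfrac12\langle f_+,\Delta f_-\rangle$ — and then match against the constants generated by the shifts $\delta\Lag_0(\phi_0)$ used to pass between $P$ and $P^{\phi_0}$. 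That this all cancels consistently is precisely the identity $\Delta_D = \tfrac12(\Delta_R+\Delta_A)$ combined with $\Delta_R f = \Delta_D f + \tfrac12\Delta f$; once that algebra of propagators is in hand, the rest is mechanical.
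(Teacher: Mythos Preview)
Your overall architecture matches the paper's: use causal factorization for a future-supported representative of $f$, use the dynamical relation (Schwinger--Dyson) to pass from that special case to the general one, deduce (ii) from (i) applied twice, and verify associativity by computing both bracketings of $W(f)W(g)S(P)$. Two concrete points, however, do not go through as you wrote them.

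First, the decomposition $f = f_+ + f_-$ with $f_\pm$ supported in the future, respectively past, of $\supp P$ does not exist in general: nothing prevents $\supp f$ from meeting $\supp P$ itself, and no Cauchy surface ``separates a neighbourhood of $\supp P$'' from both its future and its past. The paper avoids this by using the decomposition $f = f_P + K g_P$ with $\supp f_P$ succeeding $\supp P$ (standard for normally hyperbolic $K$). Then $W(f) = W(f_P)$ because $W(Kg_P)=1$, causal factorization gives $W(f_P)S(P) = S(F_{f_P}+P)$, and one applies the dynamical relation with shift $\phi_0 = g_P$ to this last operator to produce $S(F_{f_P}^{g_P} + P^{g_P} + \delta\Lag_0(g_P))$. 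The bookkeeping you anticipated then reduces to the identities $F_{f_P}^{g_P} + \delta\Lag_0(g_P) = F_f$ and $P^{g_P} = P^{\Delta_R f}$, the latter because $g_P = \Delta_R(f - f_P)$ and $\Delta_R f_P$ vanishes on $\supp P$. You in fact mention ``using relation (i) \dots\ so that $K\phi_0$ absorbs the part of $f$ that is not in the future'', which is exactly this; but that should \emph{replace} the $f_+/f_-$ splitting, not supplement it.

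Second, in part (ii) you cannot ``peel off the constant functional $F_f$'': $F_f$ is linear, not constant. The correct step is to apply the second formula of (i) to the full functional $F_f + P^{\Delta_R f}$, obtaining $S(F_{-f} + F_f^{-\Delta_A f} + P^{(\Delta_R - \Delta_A)f})$, and then to check the genuine cancellation $F_{-f} + F_f^{-\Delta_A f} = 0$. This uses $\langle f, \Delta_D f\rangle - \langle f, \Delta_A f\rangle = \tfrac12\langle f, \Delta f\rangle = 0$ by antisymmetry of $\Delta$; it is not a phase cancellation but a functional identity.
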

\begin{proof}
  To compute $W(f) S(P)$, we decompose 
  $f$ into $f = f_P + K g_P$, where $f_P,g_P$ are test functions 
  and the support of $f_P$ succeeds that of $P$ with regard to
  the Minkowski metric, cf.\ \cite[Sec.\ 4]{BF19}.   Thus
  \mbox{$W(f) = W(f_P)$}, hence, making use of the
  causal factorization condition as well as the dynamical
  relation underlying $\Alg$, we obtain 
  \begin{align}
    W(f) S(P) & = W(f_P) S(P) = S(F_{f_P} + P) \nonumber \\
    & = S\big(F_{f_P}^{g_P} + P^{g_P} + \delta \Lag_0(g_P)\big) \, . 
  \end{align}
  By an elementary computation one finds that
  $F_{f_P}^{\, g_P} +  \delta \Lag_0(g_P) = F_f$. Since the support of $f_P$,
  whence that of $\Delta_R f_P$, succeeds that of 
  $P$ and
  \be
  g_P = \Delta_R K g_P = \Delta_R \, (f - f_P) \, ,
  \ee
  one has $P^{\, g_P} = P^{\, \Delta_R f}$. Thus we arrive at
  $W(f) S(P) = S(F_f + P^{\, \Delta_R f})$. In an analogous manner
  one obtains the second equality in the first part of the
  statement.

\medskip 
  As to the second part, we make use of $W(f)^{-1} = W(-f)$, giving 
    \begin{align}
      \big(W(f) S(P) \big)
      W(-f) & = S(F_f + P^{\Delta_R f}) \,  W(-f) \nonumber \\
                  & = S(F_{-f} + F_f^{-\Delta_A f} + P^{\, (\Delta_R - \Delta_A) f}) \, .
    \end{align}
    Since the commutator function $\Delta = \Delta_R - \Delta_A$
    is antisymmetric, the first two functionals in the latter
    operator compensate each other, \viz
    \begin{align}
    F_{-f} + F_f^{- \Delta_A f} = \langle f, \Delta_D f \rangle
    - \langle f, \Delta_A f \rangle = (1/2) \langle f, \Delta f \rangle = 0
    \, ,
  \end{align}    
    proving statement (ii). 
    
    \medskip
    It remains to establish the assertion about multiple products.
    Picking any $f,g \in \Test(\Mc)$, it follows from the Weyl relations
    and the preceding step that
    \begin{align}
    \big(W(f) W(g)\big) \, S(P) & = e^{-(i/2) \langle f, \Delta g \rangle}
      S(F_{f + g} + P^{\Delta_R (f + g)}) \nonumber \\
      & = S(F_{f + g} -
      (1/2) \langle f, \Delta g \rangle + P^{\Delta_R (f + g)})) \, .
    \end{align}
    On the other hand, interchanging brackets,  one obtains 
    \begin{align}
      W(f) \, \big( W(g) S(P) \big) & =
      W(f) \, S(F_g + P^{\Delta_R g}) \nonumber \\
      & = S(F_f + F_g^{\Delta_R f} + P^{\Delta_R(g + f)}) \, .
    \end{align}
    By another elementary computation, one verifies that 
    \be F_{f + g} - 
      (1/2) \langle f, \Delta g \rangle =
    F_f + F_g^{\Delta_R f} \, ,
    \ee
    hence the operators in the preceding two relations coincide.
    In a similar manner one sees that also all other products do not
    produce any new relations. \qed \end{proof}

We turn now to the analysis of the subalgebra $\Alg_2 \subset \Alg$.
Its generating elements $S(\biP)$ are given by functionals
of the form 
\be 
\biP \doteq (P_0 + P_1 + P_2) \in \Pzw \, ,
\ee
where $P_0$ is constant,
$P_1$ is linear, and $P_2$ is quadratic in the underlying field.

\medskip 
Given a functional $P_2 \in \Qzw$,
we consider perturbations of the Lagrangian $\Lag_0$
by adding to it the density $P$ of 
$P_2[\phi] = (1/2) \langle \phi, P \phi \rangle$, $\phi \in \Ec$.
The perturbed Lagrangian is denoted by $\Lag_P$ and the 
resulting classical field equation~\eqref{e.2.3} involves 
the differential operator $-(K + P)$. 
As is well known, cf.\ for example~\cite{BGP}, there exist
corresponding retarded and
advanced propagators $\Delta_R^P$ and $\Delta_A^P$, fixing 
the commutator function $\Delta^P \doteq (\Delta_R^P - \Delta_A^P)$,
and the Dirac propagator
\mbox{$\Delta_D^P \doteq (1/2) (\Delta_R^P + \Delta_A^P)$}.
In view of the regularity properties of $P$, these
distributions map test functions into smooth functions. 
We will frequently make use of the basic relation
on $\Test(\Mc)$ 
\be
(K + P) \, \Delta_{A, R}^P = \Delta_{A,R}^P \, (K+P) = 1
\ee
and the resolvent equation 
\be \label{e.4.12}                        
\Delta_{A, R}^P - \Delta_{A, R}
= -  \Delta_{A, R}^P \, (P \Delta_{A, R}) =
- \Delta_{A, R} \, (P \Delta_{A, R}^P) \, .
\ee 
These relations hold on the test functions $\Test(\Mc)$. 
Note  that $(P \Delta_{A, R})$ and
$(P \Delta^P_{A, R}) = (1 - K \Delta^P_{A, R})$  
map test functions into test functions. 

\medskip 
The analysis of the properties of the operators
$S(\biP)$, $\biP \in \Pzw$, simplifies by 
making use of the fact that the contributions coming from
the constant and linear functionals
$P_0$ and~$P_1$ can be factored out from $S(\biP)$. For
constant functionals, this was already shown in the
preceding section. For the linear functionals, introduced
above, this is
a consequence of the preceding lemma. Namely, making use 
of the quadratic dependence of~$P_2$ on the field, one obtains   
\be \label{e.4.10}
F_f + P_2^{\, \Delta_A f} =
L_{(K + P) \Delta_A f}
+ (1/2) \langle \Delta_A f, (K + P) \Delta_A f \rangle + P_2  \, .
\ee
Thus, by the preceding lemma and the definition of Weyl operators, 
\be \label{e.4.11}
S(P_2) S(L_f) \, e^{(i/2) \langle f, \Delta_D f \rangle} = 
S(L_{(K+P) \Delta_A f} + P_2) \,
e^{(i/2) \langle \Delta_A f, (K + P) \Delta_A f \rangle } \, .
\ee
Noticing that the inverse of $(K + P) \Delta_A$ is given by
$K \Delta_A^{P}$, one sees  
that the linear functionals can be extracted 
from the operators $S(\biP)$, as well. We may therefore restrict 
our attention in the following to quadratic
perturbations $P_2 \in \Qzw$ and omit the index $2$. Without danger
of confusion, we will also
equate these perturbations with their respective densities. 

\medskip 
Given a perturbation $P \in \Qzw$, the perturbed algebra
$\Alg_{\Lag_P} \subset \Alg$ for the Lagrangian $\Lag_P$ is generated by the
unitary operators, cf. equation~\eqref{e.3.1}, 
\be
S_P(Q) \doteq  S(P)^{-1} S(P + Q) \, , \quad Q \in \Qzw \, . 
\ee
Defining, in analogy to \eqref{e.4.1}, functionals 
$
F^P_f[\phi] \doteq L_f[\phi] + (1/2) \langle f, \Delta^P_D f \rangle
$
on $\Ec$, it turns out that the corresponding perturbed operators
\be  \label{e.4.8} 
W_P(f) \doteq S_P(F^P_f) \, , \quad f \in \Test(\Mc) \, ,
\ee 
coincide with the Weyl operators for perturbed test
functions. In fact, according to relation \eqref{e.4.10}
we have $F_f + P^{\Delta_A f} = F^P_{(K + P) \Delta_A f} + P$.
Hence, making use of the lemma and the fact that
$\big((K + P) \Delta_A\big)^{-1} = K \Delta_A^P$, we arrive at
\be \label{e.4.14} 
W_P(f) = W(K \Delta_A^P f) \, ,  \quad f \in \Test(\Mc) \, .
\ee 
The perturbed operators $W_P(f)$, $f \in \Test(\Mc)$,
describe the exponential function
of a quantum field which satisfies a linear field equation
with regard to $K+P$. This follows from 
\be
W_P((K+P)f) = W(K \Delta_A^P (K+P) f)
= W(K f) = 1 \, .
\ee
Moreover, they satisfy the Weyl
relations with respect to the commutator function $\Delta^P$ 
fixed by $(K+P)$. In order to verify this we need to
compute the symplectic form 
$\langle (K \Delta_A^P f), \Delta \, (K \Delta_A^P g) \rangle$
for $f, g \in \Test(\Mc)$. Bearing in mind the
properties of propagators, mentioned above, we have  
\begin{align}
& \langle \Delta_A  (1-P \Delta^P_A) \, f, \, (1-P\Delta_A^P) \, g \rangle
  = \langle \Delta_A^P \, f, \, g \rangle -
  \langle  \Delta_A^P \, f \, , P \Delta_A^P \, g \rangle \, ,
  \nonumber \\[1mm] 
&  \langle (1-P \Delta^P_A) \, f, \Delta_A (1-P \Delta^P_A) \, g \rangle 
  = \langle f, \Delta_A^P \, g \rangle - \langle P  \Delta_A^P \, f,
\Delta_A^P \, g \rangle \, .
\end{align}
Since $P$ is compactly supported, it acts as a symmetric operator on
smooth functions, so the last terms in the preceding two 
equalities coincide. We therefore obtain 
\begin{align}
  & \langle (K \Delta_A^P f), \Delta \, (K \Delta_A^P g) \rangle
  = \langle(1-P\Delta^P_A) f, \, \Delta \, (1-P\Delta_A^P) \,
  g \rangle  \nonumber \\[1mm]
  & = \langle \Delta_A (1-P\Delta^P_A) \, f, \, (1-P\Delta_A^P) \, g\rangle
  - \langle (1-P\Delta^P_A) \, f, \, \Delta_A (1-P\Delta_A^P) \,
  g \rangle \nonumber \\[1mm]
  & = \langle \Delta_A^P \, f, \, g \rangle -
  \langle f, \, \Delta_A^P \, g \rangle
  = \langle f, \, \Delta^P \, g \rangle \, .
\end{align}
Thus we arrive at the Weyl relations for the perturbed operators,
\be
W_P(f) \, W_P(g) = e^{-(i/2) \langle f, \Delta^P g \rangle} \, W_P(f+g) \, ,  \quad 
f,g \in \Test(\Mc) \, .
\ee

It follows from this equality  that the commutation 
relations of the operators in $\Alg_{\Lag_P}$, $P \in \Qzw$,  depend 
on the causal structure induced by the principal symbol of $(K+P)$.
On the other hand, the  perturbative computation    
of its generating elements $S_P(Q)$, $Q \in \Pzw$, 
inherits the causal structure of Minkowski space
\mbox{\cite{Du,EpGl,Re}}. Thus  
the perturbative expansion of these operators
will in general not converge.

\section{Construction of Fock representations}

Whereas for Weyl operators the existence of Fock representations is
a well known fact, the question of whether these representations can
be extended to the full dynamical algebras involving arbitrary
local interactions is an open problem. As
a matter of fact, this question may be regarded as the remaining fundamental
problem of constructive quantum field theory \cite{BF19}.
We therefore restrict our attention here to the algebra~$\Alg_2$, involving 
perturbations of the non-interacting Lagrangian which are
at most quadratic in the underlying field. Even there, the question 
of whether this algebra is represented on Fock space
has remained open to date, to the best of our knowledge.

\medskip
In order to discuss this problem, we adopt the following strategy:
proceeding from a representation of the Weyl algebra on Fock space,
we make use of the fact that the quadratic perturbations induce
automorphisms of this algebra. It then follows from a 
result by Wald \cite{Wald} that
these automorphisms can be unitarily  
implemented on Fock space. In the present section we complement
this result by the observation that the automorphisms satisfy
an automorphic version of the causal factorization condition.
Since the Weyl algebra is irreducibly represented on Fock space,
this implies that the implementing unitary operators satisfy
the factorization condition, up to phase factors. 
In the subsequent section we will then show that the
phase of the unitary operators can be adjusted such that they fully
comply with causal factorization. 

\medskip
The computation of the adjoint action of quadratic 
perturbations $S(P)$   
on the Weyl operators, $P \in \Qzw$, 
is accomplished with the help of Lemma~\ref{l.4.1}. It yields,
$f \in \Test(\Mc)$, 
\begin{align}
& S(P)^{-1} W(f) S(P)  = S(P)^{-1} S(F_f + P^{\Delta_R f}) \nonumber \\[1mm] 
 & = S(P)^{-1} S(F^P_{(K+P) \Delta_R f} + P) = W_P((K+P) \Delta_R f) \, .
\end{align}
In the second equality, we made use of
equation \eqref{e.4.10}, where   
$\Delta_A f$ has been replaced by $\Delta_R f$ and, in the last
equality, we employed definition \eqref{e.4.8}
of the perturbed Weyl operators. According to relation \eqref{e.4.14}, the
latter operator coincides with $W((K \Delta_A^P) ((K+P) \Delta_R) f)$, 
where the products of
propagators and differential operators in
the brackets preserve the domain $\Test(\Mc)$. 
Noticing that $(K \Delta_A^P) ((K+P) \Delta_R)$
has an inverse given by
\be \label{e.5.2}
T_P \doteq (K \Delta_R^P) ((K+P) \Delta_A) =
(1 - P \Delta_R^P) (1 + P \Delta_A) \, ,
\ee
we arrive at 
\be \label{e.5.3}
S(P) W(f) S(P)^{-1} = W(T_P f) \, , \quad f \in \Test(\Mc) \, . 
\ee
One easily verifies that $T_P$ acts as the identity on $K \Test(\Mc)$,
hence it defines a real linear operator
on the quotient space $\Test(\Mc) / K \Test(\Mc)$. It
also follows from the preceding equality that it  
preserves the symplectic form, entering in the Weyl
relations, which is given by the
commutator function $\Delta$. So it is an invertible
symplectic transformation on the symplectic space
$\Test(\Mc) / K \Test(\Mc)$.

\medskip
This quotient space is
canonically associated with the Fock space of a particle. 
We denote by $\Hil$ the symmetric Fock space, based on the single
particle space $\Hil_1$ of a particle with mass $m \geq 0$.
The scalar product in $\Hil_1$ is fixed by
\be \label{e.5.4}
\big( f, g \big) \doteq
\int \! dp \ \theta(p_0) \delta(p^2 - m^2) \, 
\overline{\widetilde{f}(p)} \, \widetilde{g}(p) \, ,
\quad f,g \in \Test(\Mc) \, .
\ee
So the quotient $\Test(\Mc)/K \Test(\Mc)$ can be identified
with the dense subspace of $\Hil_1$, given by the
restrictions of the Fourier transforms
$\widetilde{f}$ of the test functions
to the mass shell $p^2 = m^2$, $p_0 \geq 0$.
Moreover, the imaginary part of the scalar product
in \eqref{e.5.4} coincides with the symplectic
form~$\langle f, \Delta \, g \rangle$, $f,g \in \Test(\Mc)$. 

\medskip
It follows that the operator on 
$\Test(\Mc)/K \Test(\Mc)$, fixed by $T_P$, acts as a
real linear, symplectic, and invertible operator $\overline{T}_P$
on a dense domain in the single particle space $\Hil_1$.
In fact, as was shown by Wald, 
the operator $\overline{T}_P$ is bounded \cite[Sec.\ 2.1]{Wald}.
Denoting by $\overline{T}_P^{\, \dagger}$ the adjoint of $\overline{T}_P$
with regarded to the scalar product given by the real part of
\eqref{e.5.4}, Wald also showed that 
the difference $(\overline{T}_P^{\, \dagger} \overline{T}_P - 1)$
lies in the Hilbert-Schmidt class \cite[Sec.\ 3]{Wald}.
This is a consequence of the fact that its
kernel $D_P$ can be represented as difference of
two Hadamard bi-solutions of the Klein Gordon equation,
\ie as a smooth bi-solution, 
\be
\mbox{Re} \, \big((T_P f, T_P g) - (f,g)\big)
= \iint \! dx dy \, f(x) D_P(x,y) \, g(y) \, .
\ee
Moreover, since $(T_P - 1)f$, $f \in \Test(\Mc)$, are test
functions, having their supports in the support of $P$, the kernel 
$D_P$ vanishes rapidly in spatial
directions if $m > 0$. Hence it determines
a Hilbert-Schmidt operator on~$\Hil_1$. If $m=0$, this still holds true 
in spacetime dimensions  $d \geq 4$.

\medskip
As shown by Shale \cite{Shale}, these facts imply that 
the automorphisms of the Weyl algebra, given in \eqref{e.5.3},
can be unitarily implemented on Fock space. Since the Weyl
operators act irreducibly on this space, these unitary
implementers are fixed, up to some phase factor.
The determination of these factors will occupy us in
the subsequent section. For the sake of simplicity, we
keep the notation $S(P)$ 
for the concrete Fock space representations
of the abstractly defined operators.
In the next step we show that the 
symplectic operators~$T_P$, underlying
their definition, satisfy a causal
factorization relation.

\medskip
Let $Q \in \Qzw$ and let
$g \doteq (K+Q) \Delta_A \, f$ with $f \in \Test(\Mc)$. 
Since $(K+Q)$ is a normally hyperbolic differential operator, 
there exist test functions $g_Q, h_Q$ such that 
\be \label{e.5.6} 
g = g_Q + (K+Q) \, h_Q
\ee 
and $\, \mbox{supp} \, g_Q \, \cap \, J^{\, 0}_-(\mbox{supp} \, Q) = \emptyset$. 
In fact, one can put $g_Q = (K+Q) \, \chi \, \Delta_R^Q g$,
$h_Q = (1 - \chi) \Delta_R^Q \, g$, where $\chi$
is a smooth function which vanishes in a neighborhood
of $J^{\, 0}_-(\mbox{supp} \, Q)$ and is equal to $1$ in the
complement of a slightly larger neighborhood. 
Because of the support properties of $g_Q$, 
one has
\be
(\Delta_R^Q - \Delta_R) \, g_Q = - \Delta_R^Q (Q \Delta_R) \, g_Q = 0 \, ,
\ee
hence 
\be \label{e.5.7}
T_Q f = (K \Delta_R^Q) (g_Q + (K+Q) \, h_Q)
= g_Q + K h_Q \, .
\ee
If $\mbox{supp} \, g \cap J^{\, 0}_-(\mbox{supp} \, Q) = \emptyset$,
there exists by the preceding argument a decomposition such that also   
$\mbox{supp} \, h_Q \cap J^{\, 0}_-(\mbox{supp} \, Q) = \emptyset$. 

\medskip 
Let us assume now that the pair $P,Q \in \Qzw$ is admissible
and that the support of $P$ succeeds that of  
$Q$ in Minkowski space, \ie $P \underset{0}{\succ} Q$. 
We choose an open neighborhood $\Cc$ of some Cauchy surface
in $\Mc$ which lies between $P$ and $Q$, \ie 
\be
J^{\, 0}_+(\mbox{supp} \, P) \cap \Cc = J^{\, 0}_-(\mbox{supp} \, Q) \cap \Cc 
= \emptyset \, . 
\ee
Let $f \in \Test(\Mc)$ with $\mbox{supp} \, f \subset \Cc$.
Then $\, \mbox{supp} \, T_Q f \subset J^{\, 0}_-(\Cc)$ and 
there is a decomposition \eqref{e.5.7}
such that $\, \mbox{supp} \, g_Q \subset \Cc$ and
$\, \mbox{supp} \, h_Q \cap \mbox{supp} \, P  = \emptyset$.
Thus $\, P \Delta_A \, g_Q = P \, h_Q = 0$. Since 
$\Delta_R^P g_Q$ has support in the complement of
$J_-^{\, 0}(\text{supp} \, Q)$, whence 
$\, (\Delta_R^{P + Q} - \Delta_R^P) \, g_Q
= - \Delta_R^{P + Q} (Q \Delta_R^P) \, g_Q = 0 \,$, 
it follows that 
\begin{align}
  T_P T_Q f & = (K \Delta_R^P) ((K+P) \Delta_A) \,
  (g_Q + K h_Q) \nonumber \\
  & = (K \Delta_R^P) \, (g_Q + (K+P) \, h_Q) = K \Delta_R^{P + Q} g_Q + K h_Q
  \, . 
\end{align}
According to  relation \eqref{e.5.6} 
\begin{align}
  g_Q & = g - (K+Q) h_Q  \nonumber \\ 
  & = (K+Q) (\Delta_A f - h_Q) = (K+P+Q)(\Delta_A f - h_Q) \, ,
\end{align}
so we obtain 
\be
T_P T_Q f = K \Delta_R^{P + Q}((K+P+Q)(\Delta_A f - h_Q)) + K h_Q
= T_{P + Q} \, f \, . 
\ee
Since any test function $f$ can be represented in the form
$f = f_\Cc + K  g_\Cc$ with $\mbox{supp} \, f_\Cc \subset \Cc$
and the operators $T_P, T_Q$ and $T_{P + Q}$ act on the image of
$K$ as the identity, the preceding relation holds for all
$f \in \Test(\Mc)$. Thus we have arrived at the causal
factorization relation in Minkowski space 
\be \label{e.5.12}
T_P T_Q = T_{P + Q} \, , \quad P \underset{0}{\succ} Q \, . 
\ee

We turn now to the general case. Let
$P, Q, N$ be
an admissible triple of
quadratic perturbations such that $P$ succeeds $Q$
with regard to $N$. Putting 
$T^N_P \doteq T_N^{-1} T_{P + N}$, we need to show that 
\be \label{e.5.13}
T^N_P T^N_Q = T^N_{P + Q} \quad \text{if} \quad
P \underset{N}{\succ} Q \, . 
\ee
For the metric $g_N$, fixed by $N$, there 
exists an open neighborhood $\Cc$ of some Cauchy surface
in $\Mc$ such that 
\be
J^N_+(\supp P) \cap \Cc =J^N_-(\supp Q) \cap \Cc = \emptyset \, .
\ee
Turning to the proof of the causality relation,  we proceed from 
\begin{align} \label{e.5.15}
& T_Q^N (K \Delta_A^N)  \nonumber \\
& = \underbrace{(K \Delta_A^N)(( K+N) \Delta_R)}_{T_N^{-1}} \, 
\underbrace{(K \Delta_R^{N+Q}) ((K+N+Q) \Delta_A)}_{T_{N+Q}} (K \Delta_A^N)
\end{align}
Now $\Delta_{A,R} \, (K \Delta_{A,R}^N) = \Delta_{A,R}^N $, as a 
consequence of the resolvent equation~\eqref{e.4.12}. Hence the
preceding equality simplifies to
\be
T_Q^N (K\Delta_A^N)=(K\Delta_A^N)\big((K+N)\Delta_R^{N+Q} \big)
\big( (K+N+Q)\Delta_A^N \big) \, . 
\ee
We observe that after a similarity transformation with $K\Delta_A^N$,
the operator $T_Q^N$ has the same form as $T_Q$ with the Klein Gordon
operator $K$ replaced by $(K+N)$. Thus the argument for the product rule
\eqref{e.5.13} is the same as for \eqref{e.5.12}, noticing that all
underlying propagators have support properties which are
consistent with the causal order relative to the
chosen broadened Cauchy surface $\Cc$. Multiplying
equation \eqref{e.5.13} from the left by $T_N$, we
arrive at
\be
T_{P + N} T_N^{-1} T_{Q + N} = T_{P + Q + N} \quad \text{if}
\quad P \underset{N}{\succ} Q \, .
\ee
This equality implies that the adjoint action of
$S(P + N) S(N)^{-1} S(Q + N)$ on Weyl operators coincides
with the action of $S(P + Q + N)$. So these two operators
comply with the condition of causal factorization, 
up to some undetermined  phase factor.

\medskip
It also follows from equation \eqref{e.5.15}, cf.\ also 
\eqref{e.4.14} and \eqref{e.5.3},  that for 
any given $N,Q \in \Qzw$ 
the operators $S_N(Q) \doteq S(N)^{-1} S(Q + N)$ commute with all
perturbed Weyl operators
$W_N(f) = W(K \Delta_A^N f)$ for test functions
$f$ having their support in the spacelike
complement of $\text{supp} \, Q$ with regard to the
metric~$g_N$. Note that under these circumstances 
$Q \Delta_A^N f = 0$ and $\Delta_R^{N+Q} f = \Delta_R^N f$,
hence $T_Q^N$ acts like the identity on
$(K \Delta_A^N) f$.  
Thus, presuming that the perturbed Weyl operators satisfy
the condition of Haag duality \cite{LoRoVe},
the operators $S_N(Q)$ are elements
of the von Neumann algebra generated by
$W_N(f)$ for test functions $f$ having their support
in any causally closed region containing $\text{supp} \, Q$.  
Whence, pairs of operators $S_N(P), S_N(Q)$ commute 
if the functionals $P,Q \in \Qzw$ have spacelike
separated supports relative to the metric $g_N$,
denoted by~$P \underset{N}{\perp} Q$,

\medskip
Let us mention as an aside that Haag duality 
has been established by Araki~\cite{Ar} in case of non-interacting
scalar fields on Minkowski space, \ie $N = 0$.
Apparently, a fully satisfactory proof for perturbations
\mbox{$N \in \Qzw$} of this field has not
yet appeared in the literature. Yet there exist
unpublished results to that effect \cite{MoVe}, so we
take it for granted here.  

\medskip
We extend now the operators $S(P)$, $P \in \Qzw$,  to arbitrary 
perturbations $\biP \in \Pzw$. This is accomplished by  
observations made in the preceding section. Namely,
given any quadratic perturbation $P$, we put 
for arbitrary constants $c$ and linear functionals
$L_f = (F_f - (1/2) \langle f, \Delta_D f \rangle)$, 
compare equation~\eqref{e.4.11},
\begin{align} \label{e.5.18}
  S(c + L_f + P)
  & \doteq e^{i(c - (1/2)\langle f, \Delta_D^P f \rangle)} \, S(P) \, W(K \Delta_A^P f)
\nonumber \\ 
& = e^{i(c - (1/2)\langle f, \Delta_D^P f \rangle)} \, W(K \Delta_R^P f) \, S(P) \, .
\end{align}
The second equality follows from the adjoint action 
of $S(P)$ on Weyl operators, cf.~\eqref{e.5.3},  and 
$T_P \, K \Delta_A^P = K \Delta_R^P$.

\medskip
The extended operators
satisfy, for fixed $P \in \Qzw$, the
causal factorization relations. To give an example, the
preceding relations imply after some elementary computation
that, $f,g \in \Test(\Mc)$,
\be
S(F_f + P) S(P)^{-1} S(F_g + P) = e^{i \langle f, \Delta_A^P g \rangle} \,
S(F_f + F_g + P) \, .
\ee
Thus if $\text{supp} f \underset{P}{\succ} \text{supp} \, g$,
the phase factor is equal to $1$, in accordance with
the condition of causal
factorization. In a similar manner one verifies the causal factorization 
for all products of Weyl operators and the extended operators
involving a fixed quadratic perturbation. In other
words, the ambiguities in the phase factors 
appearing in the causal factorization relations of
the unitaries $S(\biP)$ depend only on the
quadratic parts \mbox{$P \in \Qzw$} of the functionals $\biP \in \Pzw$.

\medskip 
Relation \eqref{e.5.18} also implies that the extended operators
satisfy the dynamical condition, involving the Lagrangian $\Lag_0$. Since
constant functionals factor out from this condition, it suffices
to verify this assertion for functionals of the form $(F_f^P + P)$ for
arbitrary $f \in \Test(\Mc)$. A by now
routine computation shows that for perturbations $P \in \Qzw$
one obtains for the shifted functionals the equality  
\be
(F_f^P + P)^{\phi_0} + \delta \Lag_0 (\phi_0)
= F^P_{f + (K + P) \phi_0} + P \, , \quad 
\phi_0 \in \Ec_0 \, .
\ee
Thus 
\begin{align}
& S(P)^{-1}S((F_f^P + P)^{\phi_0} + \delta \Lag_0 (\phi_0)) =
S(P)^{-1}S(F^P_{f + (K + P) \phi_0} + P) \nonumber \\
& = W_P(f + (K + P) \phi_0) = W_P(f) = S(P)^{-1} S(F^P_f + P)  \, ,
\end{align}
where in the second equality we made use of the definition
\eqref{e.4.8} of the perturbed Weyl operators. The 
third equality is a consequence of the Weyl relations and the
fact that $W_P((K + P)\phi_0) = 1$. 
So we arrive, as claimed, at 
\be \label{e.5.22}
S(\biP^{\phi_0} + \delta \Lag_0(\phi_0)) = S(\biP) \quad \text{for} \quad
\biP \in \Pzw \, , \ \phi_0 \in \Ec_0 \, .
\ee

We summarize the results obtained in this section in a proposition.

\begin{proposition} \label{p.5.1}
  Let $\biP \in \Pzw$. There exist unitary operators $S(\biP)$
  on Fock space, inducing automorphisms of the Weyl algebra, which are 
  determined by equation~\eqref{e.5.18}. These operators
  satisfy the dynamical equation  
  \be \label{e.5.23}
  S(\biP^{\phi_0} + \delta \Lag_0(\phi_0)) =
  S(\biP) \, , \quad\ \phi_0 \in \Ec_0  \, .
  \ee
  Moreover, for any admissible triple of
  functionals $\biP,\biQ,\biN \in \Pzw$ satisfying
  $\biP \underset{\biN}{\succ} \biQ$,
  there exists a phase
  $\alpha(N|P,Q) \in \TT \doteq \{\xi \in \CC : | \xi | = 1\}$, 
  depending only on
  the quadratic parts $P,Q,N$ of the functionals, such that 
  \be \label{e.5.24}
  S(\biP + \biN) S(\biN)^{-1} S(\biQ + \biN)
  = \alpha(N|P,Q) \, S(\biP + \biQ + \biN) \, .
  \ee
  If $\biP, \biQ$ are spacelike separated,
  $\biP \underset{\biN}{\perp} \biQ$, the product in \eqref{e.5.24}
  is symmetric
  in  $\biP$, $\biQ$, \ie  $\alpha(N|P,Q) = \alpha(N|Q,P)$. 
\end{proposition}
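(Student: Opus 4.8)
The plan is to collect the constructions carried out above in this section, treating existence first, then the dynamical equation, then the causal factorization relation and its symmetry under spacelike separation. \emph{Existence.} For $P \in \Qzw$ the discussion following \eqref{e.5.3} produces a real-linear, symplectic and bounded operator $\overline{T}_P$ on $\Hil_1$ with $(\overline{T}_P^{\,\dagger}\overline{T}_P - 1)$ in the Hilbert--Schmidt class, so by Shale's theorem the automorphism $W(f)\mapsto W(T_P f)$ of the Weyl algebra is unitarily implemented; I fix one implementer and call it $S(P)$. For a general $\biP = c + L_f + P \in \Pzw$ I take \eqref{e.5.18} as the definition of $S(\biP)$. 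Three routine checks remain: the splitting of $\biP$ into its constant, linear and quadratic parts is unique, so \eqref{e.5.18} is unambiguous once $S(P)$ has been chosen; the two lines of \eqref{e.5.18} agree, which follows from \eqref{e.5.3} together with $T_P\,K\Delta_A^P = K\Delta_R^P$; and the right-hand side of \eqref{e.5.18}, a product of a unimodular scalar, the unitary $S(P)$ and a Weyl unitary, is unitary and sends each $W(g)$ to a scalar multiple of $W(T_P g)$, hence its adjoint action restricts to an automorphism of the Weyl algebra. This proves the first assertion.

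\emph{Dynamical equation.} Equation \eqref{e.5.23} is \eqref{e.5.22}, established just above: the constant part of $\biP$ is invariant under the shift $\phi\mapsto\phi+\phi_0$ and lies in the centre, hence factors out, and for functionals of the form $F^P_f + P$ one combines the shift identity $(F^P_f + P)^{\phi_0} + \delta\Lag_0(\phi_0) = F^P_{f+(K+P)\phi_0} + P$ with $W_P((K+P)\phi_0)=1$ and the Weyl relations, exactly as displayed.

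\emph{Causal factorization.} Let $\biP,\biQ,\biN\in\Pzw$ be admissible with $\biP\underset{\biN}{\succ}\biQ$, and let $P,Q,N$ be their quadratic parts. Multiplying \eqref{e.5.13} from the left by $T_N$ yields $T_{P+N}T_N^{-1}T_{Q+N} = T_{P+Q+N}$ under precisely this hypothesis, so by \eqref{e.5.3} the unitaries $S(P+N)S(N)^{-1}S(Q+N)$ and $S(P+Q+N)$ implement one and the same automorphism of the Weyl algebra; since the latter acts irreducibly on $\Hil$, they differ by a factor $\alpha(N|P,Q)\in\TT$. To pass to the full functionals I substitute \eqref{e.5.18} into both sides of \eqref{e.5.24}: the constant and linear parts of $\biP,\biQ,\biN$ contribute Weyl unitaries and explicit scalar phases, and by the identities displayed after \eqref{e.5.18} — in particular $S(F_f+P)S(P)^{-1}S(F_g+P) = e^{i\langle f,\Delta_A^P g\rangle}S(F_f+F_g+P)$ and its counterparts for the mixed products with Weyl operators — these obey causal factorization with trivial phase whenever the relevant supports are causally ordered, which is forced by $\biP\underset{\biN}{\succ}\biQ$. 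Hence all constant- and linear-part contributions cancel between the two sides of \eqref{e.5.24}, and the residual factor is exactly $\alpha(N|P,Q)$, a function of $P,Q,N$ only.

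\emph{Symmetry and main obstacle.} If $\biP\underset{\biN}{\perp}\biQ$, then $\biP\underset{\biN}{\succ}\biQ$ and $\biQ\underset{\biN}{\succ}\biP$ both hold, so \eqref{e.5.24} applies in both orders; since $S(\biN)^{-1}S(\biP+\biN)$ and $S(\biN)^{-1}S(\biQ+\biN)$ commute — by the locality argument preceding the proposition, resting on the (assumed) Haag duality for the perturbed Weyl operators — the two left-hand sides coincide, and comparison with the unitary $S(\biP+\biQ+\biN)$ forces $\alpha(N|P,Q)=\alpha(N|Q,P)$. The main obstacle is the bookkeeping in the causal-factorization step: one must verify that \emph{every} constant- and linear-part contribution arising when \eqref{e.5.18} is inserted into \eqref{e.5.24}, including all mixed products of Weyl operators with the extended operators, factorizes with phase $1$ under $\biP\underset{\biN}{\succ}\biQ$, so that the surviving ambiguity genuinely depends only on $P,Q,N$ and not on the linear or constant parts of $\biN$. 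The one genuinely analytic ingredient, the symplectic identity $T_{P+N}T_N^{-1}T_{Q+N}=T_{P+Q+N}$, has already been established above via the broadened-Cauchy-surface decomposition and is invoked here as a black box.
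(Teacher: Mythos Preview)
Your proposal is correct and follows essentially the same route as the paper: the proposition is stated there as a summary of Section~5, and you have faithfully reassembled its ingredients --- Shale's theorem for the existence of $S(P)$, the extension \eqref{e.5.18} to general $\biP$, the dynamical identity \eqref{e.5.22}, the symplectic factorization $T_{P+N}T_N^{-1}T_{Q+N}=T_{P+Q+N}$ combined with irreducibility to produce the phase $\alpha(N|P,Q)$, the reduction of the constant/linear contributions to trivial phases, and the Haag-duality commutativity for the symmetry statement. Your explicit flagging of the ``bookkeeping'' step (that the linear and constant parts contribute only trivial phases) matches the paper's own level of detail, which dispatches this with ``in a similar manner one verifies\ldots''.
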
  

The family of functionals $\Pc$ is stable under translations,
yet not under Lorentz transformations because of the choice 
of a time direction in our standing assumption. 
Since there exists a unitary representation 
$\lambda \mapsto U(\lambda)$ of the Poincar\'e group on Fock space,
one can proceed from the operators $S(P)$, $P \in \Pc$,
to operators which are compatible
with any other choice of the time direction.
Namely, given $\lambda$, the unitaries  
$U(\lambda) S(\biP) U(\lambda)^{-1}$ induce automorphisms of the Weyl
operators whose quadratic part is fixed by the transformed
single particle operators  
$\overline{T}_{P_\lambda} \doteq U(\lambda) \overline{T}_P U(\lambda)^{-1}$.
Thus these unitaries comply, for adequate 
$\lambda$, with any choice of the time direction
in the standing assumption and satisfy the proposition
as well. In particular, the phase $\alpha$ in the
proposition can be chosen to be Poincar\'e invariant. 

\section{Phase factors and causal factorization}

We turn now to the problem of fixing the phases of the operators
$S(\biP)$, $\biP \in \Pzw$, so that they fully comply with the
causal factorization condition for the restricted
set of functionals. A somewhat simpler problem  
was treated by Scharf and Wreszinski~\cite{Scharf-Wreszinski}
for the case of a Fermi field, coupled to an external electromagnetic
field, cf.\ also \cite{GB}.                          
The kinetic perturbations are more singular, however, and 
an analogous computational approach, based on explicit expressions
for the factors $\alpha$ in \eqref{e.5.24}, cf.\ for
example \cite{Rob},  would require
some coherent non-perturbative renormalization scheme.\footnote{This 
is related to the problem of associating determinants to hyperbolic
  differential operators. For recent progress in the case of elliptic
  operators see \cite{Dang} where, however, the class of allowed
  perturbations is less singular.} 

\medskip
We therefore adopt here a different strategy. Relying on the 
results of Wald~\cite{Wald}, we have established in
the preceding section the existence of unitary operators
$S(P)$ on Fock space, which determine a projective representation of 
the group $\fQ$, generated by the operators $T_P$ for quadratic 
perturbations $P \in \Qzw$ on the single particle
space. The cohomology of this representation 
is known to be non-trivial due to the appearance of Schwinger
terms, cf.\ \cite{La} and references quoted there. Yet
these singularities are expected not to  
affect the causal factorization, involving 
perturbations with disjoint supports. We therefore focus on 
the projective causal factorization equation, stated in
Proposition \ref{p.5.1}, and look at it from a cohomological
point of view.

\medskip
Let $\alpha(N|P,Q) \in \TT$ be
the phase factors appearing in 
equation \eqref{e.5.24} for
quadratic functionals $P,Q,N \in \Qzw$. 
We begin by exhibiting two basic relations
satisfied by them,
which are used time and again. They
are a consequence of the associativity of
the underlying operator products.
We say that $\alpha(N|P,Q)$ is \textit{well defined}
if $P,Q,N \in \Qzw$ is an
admissible triple, satisfying the
causality condition $P \underset{N}{\succ} Q$.
\begin{lemma} \label{l.6.1} 
  Let $P_1, P_2, Q_1, Q_2, N \in \Qzw$. With 
  $P \doteq P_1 + P_2$ and  \mbox{$Q \doteq Q_1 + Q_2$}, one has
  \begin{align} \label{e.6.1}
    \alpha(N|P,Q) & =  \alpha(N|P_1,Q)  \
    \alpha(N + P_1|P_2,Q)  \nonumber  \\
    & = \alpha(N|P, Q_1) \ \alpha(N + Q_1|P, Q_2)  \, , 
  \end{align}
  provided all phases $\alpha$ are well defined.
\end{lemma}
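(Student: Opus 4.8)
The plan is to deduce both identities of~\eqref{e.6.1} from associativity alone, working throughout with the relative (Bogoliubov) operators $S_M(R)\doteq S(M)^{-1}S(M+R)$, cf.~\eqref{e.3.1}. First I would record two preliminary facts. Multiplying the projective factorization relation~\eqref{e.5.24} on the left by $S(N)^{-1}$ (and using that $\alpha$ depends only on the quadratic parts, here $P,Q,N$ themselves) puts it in the equivalent form
\be
S_N(P)\,S_N(Q)=\alpha(N|P,Q)\,S_N(P+Q)\qquad\text{whenever $\alpha(N|P,Q)$ is well defined,}
\ee
\ie whenever $(P,Q,N)$ is admissible and $P\underset{N}{\succ}Q$; the same holds with $N$ replaced by $N+P_1$. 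Second, straight from the definitions one has the \emph{unconditional} cocycle identity
\be \label{e.cocyc}
S_N(R+R')=S_N(R')\,S_{N+R'}(R),
\ee
valid for any functionals with $N+R'\in\Fun$, since its right-hand side equals $S(N)^{-1}S(N+R')S(N+R')^{-1}S(N+R'+R)$. Admissibility of the triples appearing in the hypothesis guarantees that every base functional occurring below lies in $\Fun$, so~\eqref{e.cocyc} may be used freely.

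For the first equality of~\eqref{e.6.1} I would evaluate the triple product
\be
X\doteq S_N(P_1)\,S_{N+P_1}(P_2)\,S_{N+P_1}(Q)
\ee
in two ways. Associating the last two factors: since $P_2\underset{N+P_1}{\succ}Q$ (part of the well-definedness of $\alpha(N+P_1|P_2,Q)$), the factorization relation in the $(N+P_1)$-perturbed algebra applies, and then~\eqref{e.cocyc} with $R'=P_1$, $R=P_2+Q$ collapses the remaining product, giving $X=\alpha(N+P_1|P_2,Q)\,S_N(P+Q)$. Associating instead the first two factors: \eqref{e.cocyc} with $R'=P_1$, $R=P_2$ turns $S_N(P_1)S_{N+P_1}(P_2)$ into $S_N(P)$; then, using~\eqref{e.cocyc} backwards together with the factorization relation for the (well-defined) phase $\alpha(N|P_1,Q)$, which needs $P_1\underset{N}{\succ}Q$, one rewrites $S_{N+P_1}(Q)=\alpha(N|P_1,Q)^{-1}S_N(Q)$; finally the factorization relation for $\alpha(N|P,Q)$ (here $P\underset{N}{\succ}Q$) gives $X=\alpha(N|P_1,Q)^{-1}\alpha(N|P,Q)\,S_N(P+Q)$. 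Equating the two evaluations and cancelling the unitary $S_N(P+Q)$ yields $\alpha(N|P,Q)=\alpha(N|P_1,Q)\,\alpha(N+P_1|P_2,Q)$.

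The second equality of~\eqref{e.6.1} follows by the mirror computation: evaluate $Z\doteq S_N(P)\,S_N(Q_1)\,S_{N+Q_1}(Q_2)$ in two ways. Grouping the last two factors collapses them to $S_N(Q)$ by~\eqref{e.cocyc}, so $Z=\alpha(N|P,Q)\,S_N(P+Q)$; grouping the first two uses the factorization relation for $\alpha(N|P,Q_1)$ (needing $P\underset{N}{\succ}Q_1$), then~\eqref{e.cocyc}, then the factorization relation for $\alpha(N+Q_1|P,Q_2)$ (needing $P\underset{N+Q_1}{\succ}Q_2$), and a last application of~\eqref{e.cocyc}, giving $Z=\alpha(N|P,Q_1)\,\alpha(N+Q_1|P,Q_2)\,S_N(P+Q)$. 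Comparing the two evaluations yields the claim.

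I expect the only real work to be bookkeeping: checking that each invocation of the \emph{conditional} factorization relation is licensed by exactly one of the well-definedness hypotheses, with the $\succ$ condition matched to the correct base functional, whereas each use of~\eqref{e.cocyc} requires nothing beyond the shifted base functional lying in $\Fun$, which admissibility supplies. There is no analytic subtlety; the identities are genuinely algebraic consequences of associativity and of the definition of the relative operators.
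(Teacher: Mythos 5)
Your argument is correct and is essentially the paper's own proof in different notation: both derive the identities by evaluating a triple product in two ways using only associativity and the projective factorization relation \eqref{e.5.24}, and your ``cocycle identity'' for the Bogoliubov operators $S_N$ is precisely the definitional telescoping $S(P_1+N)^{-1}S(P_1+N)=1$ that the paper's chain of equalities uses implicitly. The well-definedness conditions you invoke match the lemma's hypotheses exactly, so nothing is missing.
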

\noindent \textbf{Remark:} 
These relations comprise within the present context the
essential part of the information contained in the cocycle equations, 
determined by the underlying projective representation of $\fQ$.
\begin{proof}
We have 
\begin{align} 
&  \alpha(N|P_1,Q) \, \alpha(N+P_1|P_2,Q) \ S(P + N + Q) \nonumber \\
  &  =  \alpha(N|P_1,Q) \, \alpha(N+P_1|P_2,Q) \
  S(P_2 + (P_1 + N) + Q) \nonumber \\
&  = \alpha(N|P_1, Q)) \ 
  S(P_2 + (P_1 + N)) S(P_1 + N)^{-1} S(P_1 + N + Q )  \\
&  = S(P + N) \underbrace{S(P_1 + N)^{-1} S(P_1 + N)}_{=1}
  S(N)^{-1} S(N + Q)  \nonumber \\
& = \alpha(N|P,Q) \ S(P + N + Q)  \, . \nonumber 
\end{align}
So the first equality in the statement follows.  
The second equality is obtained in a similar manner.  \qed 
\end{proof}

It is our goal to show that there exists a collection
of phases $\beta(P) \in \TT$, $P \in \Qzw$, 
such that for any admissible triple of functionals 
$P,Q,N \in \Qzw$ with $P \underset{N}{\succ} Q$, one has
\be \label{e.6.3} 
\alpha(N|P,Q) = \beta(P+N)^{-1} \beta(N) \beta(Q + N)^{-1}
\beta(P + Q + N) \, .
\ee
Note that for any choice of phases $\beta$, the
expression on the right hand side satisfies the
equalities in the preceding lemma. So, in other words,
we want to prove that these
equalities admit only such trivial solutions, akin to
the coboundaries solving cocycle equations
in cohomology theory. 
Multiplying each operator $S(\biP)$, $\biP \in \Pzw$, with the
phase factor $\beta(P)$, corresponding to the quadratic
part $P$ of $\biP$, the resulting operators satisfy the
proper causal factorization relation \eqref{e.5.24}, where
the phase factor $\alpha$ is identical to $1$. Moreover, since
the quadratic part $P$ of $\biP$ is not affected in the
dynamical relation \eqref{e.5.23}, this relation still holds true for the
modified operators $\beta(P) S(\biP)$, $\biP \in \Pzw$.
We thereby arrive at the main result of this article.

\begin{theorem} \label{t.6.2}
  Let $\Alg_2$ be the dynamical C*-algebra generated
  by unitaries $S(\biP)$, $\biP \in \Pzw$, which satisfy the dynamical
  condition (i) for the Lagrangian $\Lag_0$ of a scalar field with 
  mass $m \geq 0$ in $d > 2$ spacetime dimensions,
  as well as the causal
  factorization equation (ii). If $m > 0$, this algebra is
  represented by an extension of the
  Weyl algebra on the (positive energy) Fock space
  for any value of $d$; if
  $m = 0$, the dimension must satisfy $d \geq 4$.
\end{theorem}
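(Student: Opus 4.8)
The plan is to start from the Fock-space operators $S(\biP)$, $\biP \in \Pzw$, supplied by Proposition~\ref{p.5.1}, and to cancel the residual phases in their causal factorization rule~\eqref{e.5.24} by a rescaling of the generators. As already noted in the discussion preceding the theorem, this succeeds as soon as one exhibits phases $\beta(P) \in \TT$, indexed by the quadratic parts $P \in \Qzw$ of the functionals and normalised by $\beta(0) = 1$, satisfying~\eqref{e.6.3} for every admissible triple $P,Q,N$ with $P \underset{N}{\succ} Q$: replacing each $S(\biP)$ by $\beta(P)\,S(\biP)$, with $P$ the quadratic part of $\biP$, one obtains unitaries on $\Hil$ that still obey the dynamical relation~\eqref{e.5.23} --- the quadratic part being untouched by the shift $\biP \mapsto \biP^{\phi_0} + \delta\Lag_0(\phi_0)$ --- and now obey~\eqref{e.5.24} with phase $\alpha \equiv 1$, hence the two defining relations of $\Alg_2$. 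The assignment $S(\biP) \mapsto \beta(P)\,S(\biP)$ therefore extends to a representation of $\Alg_2$ on Fock space; restricted to the linear functionals it reproduces the Fock representation of the Weyl algebra~\eqref{e.4.2}, so it is an extension of the latter. The dimensional restriction $d \geq 4$ in the case $m = 0$ is inherited entirely from Section~5: it is exactly what guarantees that the kernels $D_P$ define Hilbert--Schmidt operators on $\Hil_1$, so that by Shale's criterion the symplectic transformations $\overline{T}_P$ are unitarily implementable; for $m > 0$ no condition on $d$ is needed. Everything thus reduces to the construction of the phases $\beta$.

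Spelled out, equation~\eqref{e.6.3} says that $\alpha(N|P,Q)$ is the mixed second difference of $\beta$ with respect to the two increments $P$ and $Q$ attached to the base point $N$, and the two multiplicative identities of Lemma~\ref{l.6.1} are precisely the integrability conditions ensuring that such a $\beta$ can exist; what remains is to prove that the $U(1)$-valued second cohomology carried by the causal structure on $\Qzw$ is trivial, i.e.\ that the $\alpha$'s, which by Lemma~\ref{l.6.1} form a cocycle, are a coboundary. The structural inputs I would bring in are the ones not yet used: (a) translation covariance, by which the phases $\alpha$ may be taken Poincar\'e invariant and the implementers may be chosen to transform covariantly under the translations on $\Hil$; (b) causal disjointness, since each metric $g_N$ coincides with the Minkowski metric outside the compact set $\supp N$, so that functionals with sufficiently far separated supports are spacelike for all metrics occurring in a given relation, the attached operators commute, and the corresponding phases are symmetric; and (c) the convexity of $\Qzw$ and its stability under scalings by numbers in $[0,1]$, which make it suitably directed.

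The construction I would attempt proceeds in two stages. First, the case $N = 0$. For $P \in \Qzw$ with time-support contained in a slab of width $2a$, the translates $\tau_{kT}(P)$, $k \in \ZZ$, along the distinguished time direction with spacing $T > 2a$ form a chain totally ordered by $\underset{0}{\succ}$ whose finite partial sums are all admissible. Iterating the causal factorization rule for the operators $S$ along such a chain, and using Lemma~\ref{l.6.1} together with translation invariance to identify phases whose arguments differ only by a translation, one obtains relations among the phases attached to widely time-separated copies of $P$; the task is to show that these relations, as the chain is extended towards past and future infinity, single out one phase --- to be taken as $\beta(P)$ --- independently of the auxiliary data ($a$, $T$, the time direction), and that the family so obtained satisfies~\eqref{e.6.3}. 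Second, the general case is reduced to this one by the similarity transformation exhibited around~\eqref{e.5.15}: after conjugation by $K\Delta_A^N$ the perturbed operator $T_Q^N$ has the form of $T_Q$ with the Klein--Gordon operator $K$ replaced by the normally hyperbolic operator $K+N$, so the argument of the first stage applies, mutatis mutandis, to the $N$-dependent phases and yields~\eqref{e.6.3} in full generality.

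The main obstacle is precisely this construction --- showing that the candidate $\beta$ is well defined and genuinely trivialises the cocycle $\alpha$ on every admissible triple. The difficulty is cohomological in nature: one asserts the vanishing of an $H^2$-type obstruction over a causal structure that is not a group, and, since the ambient projective representation of $\fQ$ on $\Hil$ is genuinely non-trivial because of the Schwinger terms discussed in \cite{La}, the argument has to rest essentially on the causal disjointness of the supports rather than on any global feature of that representation. \qed
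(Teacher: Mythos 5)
Your reduction of the theorem to the existence of phases $\beta(P)$, $P \in \Qzw$, satisfying \eqref{e.6.3} is exactly the paper's reduction, and your remarks on why the dynamical relation \eqref{e.5.23} survives the rescaling $S(\biP) \mapsto \beta(P)S(\biP)$ and on the origin of the restriction $d \geq 4$ for $m=0$ are correct. But the heart of the matter --- actually constructing $\beta$ and proving \eqref{e.6.3} for all admissible triples --- is not carried out: you yourself flag it as ``the main obstacle''. The strategy you sketch in its place is moreover doubtful on two counts. First, relations obtained by iterating causal factorization along a chain of time translates $\tau_{kT}(P)$ only constrain phases of the form $\alpha(\,\cdot\,|\tau_{kT}P,\tau_{lT}P)$; there is no mechanism by which they ``single out'' a candidate $\beta(P)$, and your appeal to translation invariance to identify phases of translates is unjustified for $\beta$: only $\alpha$ can be chosen Poincar\'e invariant, while a trivializing $\beta$ is a priori neither unique nor invariant (it is ambiguous up to any $\gamma$ with $\delta\gamma = 1$). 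Second, the proposed reduction of general $N$ to $N=0$ via the similarity transformation around \eqref{e.5.15} misreads what that transformation does: in the paper it serves to prove the automorphic factorization behind Proposition \ref{p.5.1}, not to relate the phases $\alpha(N|\cdot,\cdot)$ to $\alpha(0|\cdot,\cdot)$. Even if it gave, for each fixed $N$, some $\beta_N$ trivializing $\alpha(N|\cdot,\cdot)$, equation \eqref{e.6.3} demands a single function $\beta$ on $\Qzw$ coherent across all $N$ simultaneously, and that coherence is precisely the difficulty.

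The paper's actual construction is quite different and none of its ingredients appear in your proposal. One first restricts the phases to triples in $\Qzw(\Kcc)$ ordered by an auxiliary Minkowski metric $\eta^{\,c}$ ($\ualpha$), then uses the symmetry of $\alpha$ for spacelike separated supports (Haag duality, the last statement of Proposition \ref{p.5.1}) together with splittings by pointwise convex partitions of unity to extend the phases to pairs $P,Q$ with merely disjoint supports (Lemmas \ref{l.6.3}, \ref{l.6.4}), obtaining a globally defined $\oalphao$ that restricts to $\alpha$ and satisfies the symmetric cocycle identities of Proposition \ref{p.6.5}. Trivialization is then achieved region by region: for a pair of disjoint compact regions $\Reg_1,\Reg_2$ one sets $\beta(R) \doteq \oalphao(\chi_0 R|\chi_1 R,\chi_2 R)$ for an adapted partition of unity, shows this trivializes $\oalphao$ on triples supported there, proves via Lemma \ref{l.6.6} that successive such modifications do not destroy triviality already achieved on earlier pairs of regions, and finally glues all of these local trivializations into one global $\beta$ by a Tychonoff compactness argument (Proposition \ref{p.6.7}). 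Your proposal contains neither the extension of the phases beyond causally ordered supports (which is what makes a splitting definition of $\beta$ possible at all) nor any mechanism for patching local trivializations together, so the proof has a genuine gap at its central step.
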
  

\medskip
Since the proof of relation \ref{e.6.3} is 
cumbersome, consisting of several steps, 
we begin with an outline of our argument. The functionals
$P \in \Qzw$, involving symmetric tensors and scalars,
depend on test functions $p$
on $\RR^d$, having values in a real vector space of
dimension $n(d) = d(d + 1)/2 + 1$. Our standing assumption
restricts these values to a 
convex set $\Kc \subset \RR^{n(d)}$ which is contractible,
\ie it is mapped into itself by scaling it  
with factors less than~$1$.
This set can be covered by an increasing net
of compact, convex and contractible 
subsets $\Kcc \subset \Kc$, $c \geq 1$, related to  
metrics of Minkowski type,
$\eta^{\, c}(x,x) = c^2 x_0^2 - \bix^2$, $x \in \RR^d$. 
The metric $\eta^{\, c}$ dominates all
metrics $g_P$ where $p$ takes values in $\Kcc$, \ie 
 the light cone
fixed by~$\eta^{\, c}$ contains the lightcone determined
by the metric~$g_P$. (See the appendix.)
The subset of functionals in $\Qzw$
involving test functions with values in
$\Kcc$ is denoted by $\Qzw(\Kcc)$.

\medskip 
In our analysis of the phases $\alpha(N|P,Q)$, we need to
consider limited numbers of (at most six)  
admissible triples of functionals \mbox{$P,Q,N \in \Qzw$}.
Any such collection of functionals is, together with the respective sums,
contained in some $\Qzw(\Kcc)$ for sufficiently large $c$. 
Making use of this fact, we can simplify the discussion of the causal
relations between the functionals appearing in the phases.  

\medskip 
Given any $c \geq 1$ and admissible triples
$P,Q,N \in \Qzw(\Kcc)$ satisfying  the causality condition 
$P \underset{N}{\succ} Q$, we restrict the corresponding  
phases $\alpha(N|P,Q)$ to the subset of triples  
satisfying the stronger causality condition
$P \csucc Q$. The latter symbol indicates that
the functional $P$ does not intersect the past of
the functional $Q$ with regard to the metric $\eta^{\, c}$,
\ie $\mbox{supp} \, P \cap \Jm{(\mbox{supp} \, Q)} = \emptyset$
in an obvious notation. Thereby, the causal relations between
the restricted functionals in $\Qzw(\Kcc)$ can be
discussed in a simpler, unified manner.
In order to mark this step, we denote 
the restricted phases by $\ualpha(N|P,Q)$ and introduce
the following terminology. 

\medskip \noindent
\textbf{Definition:} \ Let $c \geq 1$. A finite collection of 
phases $\ualpha(N_i|P_i, Q_i)$ for given admissible triples
$P_i, Q_i, N_i \in \Qzw(\Kcc)$ 
is said to be \textit{well defined} 
if \mbox{$P_i \csucc Q_i$} for $\, i = 1, \dots , N$.
In particular, the equalities \eqref{e.6.1} are satisfied
by such well defined collections of restricted phases.  

\medskip 
A major part of our argument consists of the proof
that the restricted phases $\ualpha(N|P,Q)$ can be extended
to a considerably larger set of functionals. 
As we will see, they have unique extensions $\oalpha(N|P,Q)$, being 
defined for admissible triples $P,Q,N \in \Qc(\Kcc)$  
with $\text{supp} \, P \cap \text{supp} \, Q = \emptyset$.
We will then show that these extensions are the restrictions
to $\Qc(\Kcc)$ of a global phase $\oalphao(N|P,Q)$ which is defined
for all admissible triples $P,Q,N \in \Qc$ satisfying
$\text{supp} \, P \cap \text{supp} \, Q = \emptyset$.
Moreover, $\oalphao$ coincides with the original phase
$\alpha$ on its domain.
The more transparent properties of $\oalphao$ will enable us
to prove that there exist phase factors $\beta(P) \in \TT$,
$P \in \Qc$, which trivialize it. That is, 
equation~\eqref{e.6.3} is satisfied for all 
admissible triples $P,Q,N \in \Qc$ with 
$\text{supp} \, P \cap \text{supp} \, Q = \emptyset$.
Thus,  \textit{a fortiori}, $\alpha$ can be trivialized.

\medskip
We turn now to the proof that the restricted phases
$\ualpha(N|P,Q)$ can be extended, as indicated. 
There we make use of the fact that
the phases are symmetric for spacelike separated
$P,Q$, cf.\ Proposition~\ref{p.5.1}. In accordance with our 
conventions, we will only consider pairs of
functionals which are spacelike separated with regard
to the metric $\eta^{\, c}$. It is note worthy 
that the condition of Haag duality, entailing the
symmetry of the phases, then follows already
from the seminal results of Araki \cite{Ar}.
A crucial step towards the extension of the phases is the following
lemma.

\begin{lemma} \label{l.6.3}
  Let $P_1, P_2, Q, N \in \Qzw(\Kcc)$. Then
\be \nonumber \label{alpha1} 
  \ualpha(N + P_1| Q, P_2) \
  \ualpha(N|P_1 ,Q) = \ualpha(N + P_2| P_1,Q) \ \ualpha(N| Q, P_2)\ .
\ee
if all occurring phases $\ualpha$ are well defined, 
cf.\ the preceding definition. 
\end{lemma}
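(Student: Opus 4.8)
The plan is to translate the claimed relation, by way of the causal factorization identity of Proposition~\ref{p.5.1}, into an identity between concrete unitaries on Fock space, and then to verify that identity using the commutativity of spacelike separated perturbation operators~\eqref{e.3.1}. Throughout one uses that, in accordance with the conventions adopted in this section, the pairs $P_1,Q$ and $Q,P_2$ have supports that are spacelike separated with regard to $\eta^{\, c}$; since $\eta^{\, c}$ dominates $g_N$ as well as the further perturbed metrics occurring below, the corresponding supports are then also spacelike separated with regard to those metrics. The hypothesis that the phases be well defined secures, in addition, that all triples entering are admissible, so that the operators $S(\cdot)$ below are defined and the pertinent instances of~\eqref{e.5.24} apply; in particular $S(A + M)\,S(M)^{-1}\,S(B + M) = \ualpha(M|A,B)\,S(A + B + M)$ whenever $A,B,M \in \Qzw(\Kcc)$ form an admissible triple with $A \csucc B$.

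I would first rewrite the two products of phases in operator form. Feeding the output $S(P_1 + Q + N) = S(Q + P_1 + N)$ of the relation for the triple $(P_1,Q,N)$ into the relation for $(Q,P_2,N + P_1)$ gives $\ualpha(N + P_1|Q,P_2)\,\ualpha(N|P_1,Q)\,S(P_1 + Q + P_2 + N) = L$, and feeding the output of the relation for $(Q,P_2,N)$ into that for $(P_1,Q,N + P_2)$ gives $\ualpha(N + P_2|P_1,Q)\,\ualpha(N|Q,P_2)\,S(P_1 + Q + P_2 + N) = R$, where
\begin{align}
L & \doteq S(P_1 + N)\,S(N)^{-1}\,S(Q + N)\,S(P_1 + N)^{-1}\,S(P_1 + P_2 + N) \, , \nonumber \\
R & \doteq S(P_1 + P_2 + N)\,S(P_2 + N)^{-1}\,S(Q + N)\,S(N)^{-1}\,S(P_2 + N) \, . \nonumber
\end{align}
As $S(P_1 + Q + P_2 + N)$ is unitary, the assertion of the lemma is equivalent to the operator identity $L = R$.

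To prove $L = R$, I would invoke the commutativity of spacelike separated perturbation operators (Proposition~\ref{p.5.1} and the discussion preceding it). Since $P_1 \perpc Q$, the operators $S_N(P_1) = S(N)^{-1}S(P_1 + N)$ and $S_N(Q) = S(N)^{-1}S(Q + N)$ commute, hence $S(P_1 + N)\,S(N)^{-1}\,S(Q + N) = S(Q + N)\,S(N)^{-1}\,S(P_1 + N)$; substituting this into $L$ renders the factors $S(P_1 + N)\,S(P_1 + N)^{-1}$ adjacent and leaves $L = S(Q + N)\,S(N)^{-1}\,S(P_1 + P_2 + N)$. In the same way, $Q \perpc P_2$ yields $S(Q + N)\,S(N)^{-1}\,S(P_2 + N) = S(P_2 + N)\,S(N)^{-1}\,S(Q + N)$, and substituting this into $R$ leaves (after a further cancellation of adjacent factors) $R = S(P_1 + P_2 + N)\,S(N)^{-1}\,S(Q + N)$. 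Finally, since $\supp Q$ is spacelike separated --- with regard to $\eta^{\, c}$, hence with regard to $g_N$ --- from both $\supp P_1$ and $\supp P_2$, it is spacelike separated from $\supp P_1 \cup \supp P_2 \supseteq \supp(P_1 + P_2)$; therefore $S_N(Q)$ commutes with the operator $S(N)^{-1}S(P_1 + P_2 + N)$, which is precisely the statement $L = R$.

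The step I expect to demand the most care is this last commutation. In contrast to Proposition~\ref{p.5.1}, the functional $P_1 + P_2$ need not itself be admissible; one therefore has to know that $S(N)^{-1}S(P_1 + P_2 + N)$ --- well defined because $P_1 + P_2 + N$ is one of the partial sums of the admissible triples above --- is localized in a causally closed region containing $\supp(P_1 + P_2)$, and not merely in one containing the larger set $\supp(P_1 + P_2) \cup \supp N$. That this holds is precisely what the analysis following~\eqref{e.5.15}, together with the Haag duality of the perturbed nets relied upon in Section~5, provides: after the similarity transformation with $K \Delta_A^N$ the relevant single-particle operator differs from the identity only on functions supported in $\supp(P_1 + P_2)$, so that the corresponding perturbation operator lies in the von Neumann algebra of that region. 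The remaining manipulations are routine rearrangements using only associativity and the cancellations indicated above.
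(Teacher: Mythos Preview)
The proposal rests on a misreading of the hypothesis. For a phase $\ualpha(M|A,B)$ to be \emph{well defined} means, by the definition immediately preceding the lemma, that $A \csucc B$, i.e.\ $\supp A \cap \Jm{(\supp B)} = \emptyset$. It does \emph{not} mean that $A$ and $B$ are spacelike separated with respect to $\eta^{\, c}$. The four phases in the statement are thus well defined precisely when $P_1 \csucc Q \csucc P_2$, which is the causal chain the paper's proof starts from; your blanket assumption $P_1 \perpc Q$ and $Q \perpc P_2$ is unwarranted and generically false (e.g.\ $\supp P_1$ may lie entirely inside $\Jp{(\supp Q)}$). This invalidates every commutation step: when $P_1$ sits strictly in the future of $Q$, the operators $S_N(P_1)$ and $S_N(Q)$ do \emph{not} commute --- relation~\eqref{e.5.24} gives only $S_N(P_1)\,S_N(Q) = \ualpha(N|P_1,Q)\,S_N(P_1+Q)$, and there is no reverse identity since $Q \csucc P_1$ fails. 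The same remark applies to the pairs $(Q,P_2)$ and $(Q,P_1+P_2)$, so the reductions of $L$ and $R$ to three-factor products, and their final identification, all collapse.

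The paper's proof works with the actual hypothesis by decomposing $P_1 = P_+ + P_0$, where $\supp P_+$ lies strictly above $\supp Q \cup \supp P_2$ (so that $P_+ \csucc Q$ and $P_+ \csucc P_2$ both hold) while $P_0$ is genuinely spacelike to $Q$. Each of the four phases is then expanded via Lemma~\ref{l.6.1}, systematically moving $P_0$ into the first entry and $P_+$, $P_2$ into the second and third; the single place where two arguments must be swapped involves only the spacelike pair $(P_0,Q)$, for which the symmetry in Proposition~\ref{p.5.1} applies legitimately. After these rearrangements the two sides match factor by factor. Your operator-level rewriting of the lemma as $L=R$ is correct, but to prove it you would still need this decomposition of $P_1$ (or an equivalent device) to isolate a spacelike piece before any commutation can be invoked.
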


\medskip \noindent
\textbf{Technical remark:}  
In the proof of this lemma, as well as in subsequent arguments, 
  we will make
  use of the fact that any functional $P \in \Qzw(\Kcc)$
  can be split within $\Qzw(\Kcc)$ into
  ``locally convex'' combinations 
  of functionals. This is accomplished by multiplying the
  (tensor-valued) test function $p$, underlying $P$, with some
  ``pointwise convex'' partition of  
  unity, $p_k \doteq \chi_k p$, 
  where $0 \leq \chi_k \leq 1$ are smooth functions
  and $\sum_{k = 1}^n \chi_k = 1$ on the support of $p$.
    Since $\Kcc$ is convex, the functionals
  $P_k$, which are obtained by replacing $p$ in $P$ by
  $p_k$, are contained in $\Qzw(\Kcc)$,
  $k = 1, \dots , n$. By some slight abuse of notation, we 
  put $P_k \doteq \chi_k P$, giving  
  \mbox{$\sum_{k = 1}^n P_k = (\sum_{k=1}^n \chi_k) P = P$}. Choosing suitable  
  pointwise convex partitions, we will split in this manner given
  functionals $P$
  into combinations of functionals with prescribed support properties,
  determined by the supports of $\chi_k$. For the sake of
  shortness, we omit the phrase 
  ``pointwise convex'' in the following. We will also use the  
  notation $\Jcap{ } \doteq \Jp{ } \cap \Jm{ }$ and
  $\Jcup{ } \doteq \Jp{ } \cup \Jm{ }$.

\begin{proof}
  For the proof of the lemma, we proceed from the underlying 
  condition $P_1 \csucc Q \csucc P_2$. So there exists
  a decomposition $P_1 = P_+ + P_0$ such that 
  $\supp P_+ \cap \Jm{(\supp Q \cup \supp P_2)} = \emptyset$
  and  $\supp P_0 \, \cap \, \Jcup{(\supp Q)} = \emptyset$.
  Making use Lemma \ref{l.6.1}, we then split the phases
  appearing in the statement: our underlying strategy
  consists of moving, whenever possible, $P_0$ to the first
  entry and $P_+$, $P_2$ to the second, respectively third, 
  entry. For the first factor, appearing on the left hand
  side of the equality in the statement, we obtain
\begin{align}
& \alpha(N + P_+ + P_0|Q, P_2)
=\alpha(N + P_0|Q + P_+, P_2) \, \alpha(N +  P_0|P_+, P_2)^{-1} \nonumber \\ 
& =\alpha(N + P_0 + Q|P_+, P_2) \, 
\alpha(N + P_0|Q, P_2) \, \alpha(N + P_0|P_+ , P_2)^{-1} \, . 
\end{align}
For the second factor, we get
\be
\alpha(N|P_+ + P_0,Q) = \alpha(N + P_0|P_+,Q) \, \alpha(N|P_0,Q) \, . 
\ee
The factors appearing on the right hand side
of the equality are treated similarly. The first factor
yields
\begin{align}
&  \alpha(N + P_2|P_+ + P_0,Q) \, \alpha(N + P_2|P_0, Q)^{-1} =
  \alpha(N + P_2 + P_0|P_+,Q) \nonumber \\
  &  =\alpha(N + P_0|P_+, Q + P_2) \, \alpha(N + P_0|P_+, P_2)^{-1}
  \nonumber \\
  & = \alpha(N + P_0 + Q|P_+, P_2) \, 
  \alpha(N + P_0|P_+, Q) \, \alpha(N + P_0|P_+, P_2)^{-1} \ .
\end{align} 
The second factor gives  
\begin{align}
  & \alpha(N|Q,P_2) \, \alpha(N + P_2|P_0, Q)
  = \alpha(N|Q, P_0 + P_2) \nonumber \\
& = \alpha(N + P_0|Q, P_2)\alpha(N|Q, P_0)\ .
\end{align} 
Noticing that $\alpha(N|Q, P_0) = \alpha(N|P_0, Q)$ since
$\supp P_0 \perpc \supp Q$,
we conclude that the products of the 
phase factors on the left and right hand side of
the equality in the statement coincide, as claimed. 

\medskip
Note that the conditions on the entries of
the phase factors are met 
in each of the preceding steps; because the functionals
appearing there, 
as well as their respective sums, are convex combinations of
(sums of) the given functionals, and $\Kcc$ is convex and
contractible.  \qed 
\end{proof}
                                              
We are now in a position to extend the    
restricted phases $\ualpha$ to more general entries. This 
is accomplished in several steps. Let 
$P,Q,N \in \Qzw(\Kcc)$ be admissible and let  
\be \label{e.6.9}                                 
\supp  P \cap \Jcap{(\supp Q)} = \emptyset \, .
\ee
There exists a partition 
$\chi_+, \chi_-$ such that $\chi_+ + \chi_- = 1$
on the support of $P$ and 
$P_\pm \doteq \chi_\pm P$ satisfy 
$\supp P_\pm \cap \Jmp{(\supp Q)} = \emptyset$. Moreover, 
$N + P_\pm$
are locally convex combinations of  $N$ and $N+P$.
With these constraints on $P,Q$, we can define 
\be \label{e.6.10}
\oalpha(N|P,Q) \doteq  \ualpha(N|P_+,Q) \, \ualpha(N + P_+|Q, P_-) \, .
\ee
This definition amounts to a symmetrization with 
regard to the causal order of $P,Q$, \viz it implies 
$\oalpha(N|P,Q) = \oalpha(N|Q,P)$ if $P \csucc Q$ or $Q \csucc P$.
As we shall see, this relation holds for arbitrary functionals
with disjoint supports. But we first
need to verify that $\oalpha$, so defined, (i) extends $\ualpha$ and
(ii) does not depend on the split $P = P_+ + P_-$ within the above limitations. 

\medskip   
As to (i), we note that if $P \csucc Q$, then $P_-$ and $Q$ have
spacelike separated supports, $P_- \perpc Q$, 
and we may interchange these functionals in the second factor of the
right hand side of the preceding equality. It 
then follows from Lemma \ref{l.6.1} that
$\oalpha(N|P,Q) = \ualpha(N|P,Q)$. We also note that
according to Lemma~\ref{l.6.3}, one 
may interchange the role of $P_+$ and~$P_-$ in the definition. 

\medskip 
Concerning (ii), we remark that the ambiguities involved in the splitting 
of $P$ pertain only to the spacelike complement of the support of $Q$. So let
$ P = (P_+ + P_0) + (P_- - P_0)$
be another convex splitting, where
$P_0 \perpc Q$. Then, bearing in mind the symmetry of the phases in
$P_0,Q$, we have  \!
\begin{align}
  & \quad \ualpha(N|P_+ \! + \! P_0,Q)  \,
  \ualpha(N \! + P_+\!  + P_0|Q, P_- \! - \! P_0)  \\ 
& = \ualpha(N|P_+, Q) \,
  \underbrace{\ualpha(N \! + \! P_+|P_0, Q) \,
    \ualpha(N \! + \! P_+|Q, P_0)^{-1}}_{=1}
  \, \ualpha(N \! + \! P_+|Q, P_-)  \nonumber \, ,
\end{align} 
proving that the extension $\oalpha$ is well defined. The extended phase
satisfies cocycle relations analogous to those established
for $\alpha$ in Lemma~\eqref{l.6.1}.

\begin{lemma} \label{l.6.4}
  Let $P_1, P_2, Q, Q_1, Q_2, N \in \Qzw(\Kcc)$. Putting
  $P = P_1 + P_2$, one has 
\begin{align} \label{e.6.12}
& \qquad \oalpha(N|P_1 \! + \! P_2, Q)  =
\oalpha(N|P_1, Q) \ \oalpha(N \! + \! P_1|P_2, Q)  \\[1mm] 
\label{e.6.13}
& \oalpha(N|P, Q_1) \ \oalpha(N \! + \! Q_1|P, Q_2) =
\oalpha(N|P, Q_2) \ \oalpha(N \! + \! Q_2|P, Q_1) \, , 
\end{align}
provided all terms are well defined. The latter condition
now implies that all phase factors contain 
admissible triples in $\Qzw(\Kcc)$,
where the functionals in their second and third
entry have disjoint supports, in agreement with condition~~\eqref{e.6.9}. 
\end{lemma}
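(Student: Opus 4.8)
The plan is to unfold each extended phase $\oalpha$ by its defining formula~\eqref{e.6.10} and to reduce the resulting products of restricted phases $\ualpha$ to relations already at hand: the two cocycle identities of Lemma~\ref{l.6.1} (equation~\eqref{e.6.1}, valid for the restricted phases $\ualpha$ as well), the exchange identity of Lemma~\ref{l.6.3}, and the symmetry $\ualpha(N|X,Y)=\ualpha(N|Y,X)$ for spacelike separated $X,Y$ furnished by Proposition~\ref{p.5.1}. Throughout I would use, exactly as in the technical remark preceding Lemma~\ref{l.6.3}, pointwise convex partitions of unity to split the first-slot functionals into pieces with prescribed support properties, all pieces and all their partial sums (as well as the corresponding shifted backgrounds) remaining in $\Qzw(\Kcc)$ by the convexity and contractibility of $\Kcc$. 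I would also invoke property~(ii) above, the independence of $\oalpha$ of the chosen split, so as to use for the several extended phases occurring in one identity splits that are mutually compatible.

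For~\eqref{e.6.12} I would split $P_j=P_j^{+}+P_j^{-}$ with $P_j^{+}\csucc Q\csucc P_j^{-}$ for $j=1,2$; such splits exist since well-definedness of $\oalpha(N|P_j,Q)$ forces condition~\eqref{e.6.9}. Then $P_1+P_2=(P_1^{+}+P_2^{+})+(P_1^{-}+P_2^{-})$ is a split of $P_1+P_2$ of the kind required in~\eqref{e.6.10}, which I would use. Expanding this phase through~\eqref{e.6.10} and then through the two forms of~\eqref{e.6.1}, and doing the same with the product $\oalpha(N|P_1,Q)\,\oalpha(N+P_1|P_2,Q)$, I expect the two resulting four-factor products of $\ualpha$'s to agree in their first factor $\ualpha(N|P_1^{+},Q)$ and in their last factor $\ualpha(N+P_1^{+}+P_1^{-}+P_2^{+}|Q,P_2^{-})$, while the remaining pair of middle factors on the left has the same product as the pair on the right by one application of Lemma~\ref{l.6.3}, carried out with $P_2^{+}$, $P_1^{-}$, $Q$ in the roles of $P_1$, $P_2$, $Q$ there and with background $N+P_1^{+}$ (the chain $P_2^{+}\csucc Q\csucc P_1^{-}$ holding by construction). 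Replacing $P_1+P_2$ by $P_2+P_1$ then also yields the exchange form $\oalpha(N|P_1,Q)\,\oalpha(N+P_1|P_2,Q)=\oalpha(N|P_2,Q)\,\oalpha(N+P_2|P_1,Q)$.

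For~\eqref{e.6.13} I would decompose the common first entry $P$ with respect to $Q_1$ and $Q_2$ simultaneously. Since $\oalpha(N|P,Q_1)$ and $\oalpha(N|P,Q_2)$ are well defined, \eqref{e.6.9} holds for $P$ against both $Q_i$, so $\supp P$ is covered by the four closed sets obtained by deleting $\Jp{(\supp Q_1)}$ or $\Jm{(\supp Q_1)}$ and, independently, $\Jp{(\supp Q_2)}$ or $\Jm{(\supp Q_2)}$. A subordinate pointwise convex partition gives $P=P_{++}+P_{+-}+P_{-+}+P_{--}$, with $P_{\sigma\tau}$ lying to the causal future (for $+$), resp.\ past (for $-$), of $Q_1$ according to $\sigma$ and of $Q_2$ according to $\tau$; every partial sum of these pieces has the form $\psi\,p$ with $0\le\psi\le1$ and hence lies in $\Qzw(\Kcc)$, while the shifted backgrounds lie there by convexity. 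Taking $P_{++}+P_{+-}$ and $P_{-+}+P_{--}$ as the future and past parts of $P$ relative to $Q_1$, and $P_{++}+P_{-+}$, $P_{+-}+P_{--}$ relative to $Q_2$, I would unfold all four extended phases in~\eqref{e.6.13} via~\eqref{e.6.10}, expand by~\eqref{e.6.1}, and match the two sides by finitely many applications of Lemma~\ref{l.6.3} (moving, relative to a fixed $Q_i$, a future piece past a past piece) together with the spacelike symmetry of $\ualpha$ for whatever pairs of pieces turn out to be spacelike to each other. The final assertion of the lemma---that every phase so produced involves an admissible triple in $\Qzw(\Kcc)$ whose second and third entries have disjoint supports, in accordance with~\eqref{e.6.9}---is then read off these manipulations. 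I expect no conceptually new ingredient beyond the proof of~\eqref{e.6.12}; the main obstacle is the combinatorial bookkeeping---verifying at each of the many intermediate phases both the causal condition $\csucc$ needed to apply~\eqref{e.6.1} and Lemma~\ref{l.6.3}, and membership of the functionals and their partial sums in the convex set $\Qzw(\Kcc)$---which is why it pays to fix the four-way decomposition of $P$ once and for all before carrying out any expansion.
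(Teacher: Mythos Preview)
Your proof of \eqref{e.6.12} is correct and essentially identical to the paper's: both split $P_j=P_j^{+}+P_j^{-}$, use $(P_1^{+}+P_2^{+})+(P_1^{-}+P_2^{-})$ as the split of $P_1+P_2$ in the defining formula~\eqref{e.6.10}, expand both sides via Lemma~\ref{l.6.1}, match the outer two factors, and equate the middle pair by one invocation of Lemma~\ref{l.6.3} (the paper phrases this last step as two evaluations of $\oalpha(N+P_1^{+}\mid P_1^{-}+P_2^{+},Q)$, which amounts to the same thing).

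For \eqref{e.6.13} your plan and the paper's share the same starting point---the four-way decomposition $P=P_{\tp\tp}+P_{\tp\tm}+P_{\tm\tp}+P_{\tm\tm}$---but organize the computation differently. You propose to split $P$ ``naturally'' by the first index relative to $Q_1$ and by the second index relative to $Q_2$, then expand all four $\oalpha$'s down to $\ualpha$'s and match by repeated use of Lemmas~\ref{l.6.1} and~\ref{l.6.3}. The paper instead first applies the just-proved relation~\eqref{e.6.12} to separate the \emph{diagonal} part $P_{\tp\tp}+P_{\tm\tm}$ from the \emph{off-diagonal} part $P_{\tp\tm}+P_{\tm\tp}$ in each of the four phases, and then shows that the product of ``diagonal'' factors and the product of ``off-diagonal'' factors are each separately symmetric in $Q_1,Q_2$. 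This buys a cleaner verification: the diagonal product collapses via Lemma~\ref{l.6.1} to a two-factor expression manifestly symmetric in $Q_1+Q_2$, and the off-diagonal product requires only a single, well-localized use of Lemma~\ref{l.6.3}. Your direct approach should in principle work as well, but---as you yourself note---the bookkeeping is heavier, and you have not actually carried it out; the paper's diagonal/off-diagonal trick is precisely the device that tames that bookkeeping.
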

\begin{proof}
  For the proof of the first equality in  \eqref{e.6.12}, we
  split $P_i = P_{i +} + P_{i -}$, 
  where $P_{i \pm}$ are functionals, $i = 1,2$,
  with appropriate support properties
  relative to $Q$, in accordance with 
  definition \eqref{e.6.10} of the extended phases. 
  The left hand side of \eqref{e.6.12} is then defined by 
\be 
\ualpha(N|P_{1 +} + P_{2 +}, Q) \
\ualpha(N + P_{1 +} + P_{2 +}|Q, P_{1-} + P_{2-}) \, .
\ee
Applying Lemma \ref{l.6.1} to every factor, we obtain
\begin{align} \label{e.6.15}
& \ualpha(N|P_{1 +} + P_{2 +}, Q)  \ = \
\ualpha(N|P_{1 +}, Q) \ \ualpha(N + P_{1 +}|P_{2 +}, Q) \, , \nonumber \\
& \ualpha(N + P_{1 +} + P_{2 +}|Q, P_{1-} + P_{2-})   
= \ualpha(N + P_{1 +} + P_{2 +}|Q, P_{1 -})  \nonumber  \\
& \cdot \ualpha(N + P_{1 +} + P_{2+} + P_{1 -}|Q,P_{2 -}) \, .  
\end{align}
The factors appearing on the right hand side of \eqref{e.6.12}
are given by 
\begin{align} \label{e.6.16}
  & \oalpha(N|P_1, Q) = \ualpha(N|P_{1 +}, Q) \
  \ualpha(N + P_{1 +}|Q, P_{1 -}) \, ,  \\
  & \oalpha(N + P_1|P_2, Q) = \ualpha(N + P_1|P_{2 +}, Q)
  \ \ualpha(N + P_1 + P_{2 +}|Q, P_{2 -}) \, . \nonumber
\end{align}  
Comparing the four factors appearing on the right hand sides of the
equalities in \eqref{e.6.15}, respectively \eqref{e.6.16}, we see that
two of them coincide. For the products of the remaining pairs, 
we get 
\begin{align}
&  \ualpha(N + P_{1 +}|P_{2 +}, Q) \ \ualpha(N + P_{1 +} + P_{2 +}|Q, P_{1 -})
  \nonumber \\
  &  = \oalpha(N + P_{1 +}| P_{1 -} + P_{2 +}, Q)  \nonumber \\
  & = \ualpha(N + P_{1 +}|Q, P_{1 -}) \
  \ualpha(N + P_{1 +} + P_{1 -}|P_{2 +}, Q) \, , 
\end{align}  
completing the proof of relation \eqref{e.6.12}. 

\medskip
Turning to the proof of relation \eqref{e.6.13}, we make use of  
the underlying condition 
$\supp P
\cap \big( \Jcap{(\supp Q_1)} \cup \Jcap{(\supp Q_2)} \big) = \emptyset$.
So there exists a convex decomposition
$P =
P_{\tp \tp} + P_{\tp \tm} + P_{\tm \tp} + P_{\tm \tm}$
whose components satisfy \
$\supp P_{\sigma \sigma'} \cap \big(J^{\, c}_{-\sigma}(\supp Q_1) \cup
J^{\, c}_{-\sigma'}(\supp Q_2) \big) = \emptyset$ \ for $\sigma, \sigma' = \pm$. 
We then apply relation~\eqref{e.6.12}
to the phases appearing on the left hand side of equation~\eqref{e.6.13} 
and obtain 
\begin{align} \label{a3b}
&  \oalpha(N|P, Q_1) \\ 
  & = \oalpha(N|P_{\tp \tp} \! + \! P_{\tm \tm}, Q_1)  \
  \oalpha(N \! + \!  P_{\tp \tp} \! + \! P_{\tm \tm}|P_{\tp \tm} \! +
  \!  P_{\tm \tp}, Q_1)
  \ \nonumber \, . \\
& \oalpha(N \! + \! Q_1|P,Q_2) \label{a3a}  \\ 
  & = \oalpha(N \! + \! Q_1|P_{\tp \tp} \! + \! P_{\tm \tm}, Q_2) \
  \oalpha(N \! + \! Q_1 \! + \! P_{\tp \tp} \! + \! P_{\tm \tm}|
  P_{\tp \tm} \! + \! P_{\tm \tp}, Q_2)
  \ \nonumber \, .
\end{align}
The first factors on the right hand side of
equations \eqref{a3b}, respectively~\eqref{a3a},
are by definition equal to 
\begin{align} \label{a3d}
& \oalpha(N|P_{\tp \tp} \! + \! P_{\tm \tm}, Q_1) = 
\ualpha(N|P_{\tp \tp}, Q_1) \ \ualpha(N + P_{\tp \tp}|Q_1, P_{\tm \tm}) \, , \\
\label{a3c} 
& \oalpha(N \! + \! Q_1|P_{\tp \tp} \! + \! P_{\tm \tm}, Q_2) \nonumber \\
& \qquad \qquad \ =
\ualpha(N \! + \! Q_1|P_{\tp \tp}, Q_2) \
\ualpha(N \! + \! Q_1 \! + \! P_{\tp \tp}|Q_2, P_{\tm \tm}) \, .
\end{align}
Hence, applying Lemma \ref{l.6.1} twice, their product is given by 
\be \label{a3e}
    \ualpha(N|P_{\tp \tp}, Q_1 \! + \! Q_2) \
    \ualpha(N \! + \! P_{\tp \tp}|Q_1 \! + \! Q_2,P_{\tm \tm}) \, . 
\ee
It is thus symmetric in $Q_1$, $Q_2$.

\medskip 
The second factors on the right hand side
of \eqref{a3b} and \eqref{a3a} are treated 
similarly. There we have
\begin{align} 
& \label{a3g}
  \oalpha(N \! + \! P_{\tp \tp} \! + \! P_{\tm \tm}|P_{\tp \tm} \! + \!
  P_{\tm \tp}, Q_1) \\
  &  = \ualpha(N \! + \! P_{\tp \tp} \! + \! P_{\tm \tm}|Q_1, P_{\tm \tp}) \,
  \ualpha(N \! + \! P_{\tp \tp} \! + \! P_{\tm \tm} \! + \! P_{\tm \tp}|
  P_{\tp \tm}, Q_1) 
    \, , \nonumber  \\ \!
  \label{a3f}
  & \oalpha(N \! + \! Q_1 \! + \! P_{\tp \tp} \! + \! P_{\tm \tm}|P_{\tp \tm} \! + \!
  P_{\tm \tp}, Q_2)
  \\ 
  & = \ualpha(N \! + \! Q_1 \! + \! P_{\tp \tp} \! + \! P_{\tm \tm}|P_{\tm \tp}, Q_2)
  \, \ualpha(N \! + \! Q_1 \! + \! P_{\tp \tp} \! + \! P_{\tm \tm}
  \! + \! P_{\tm \tp}|Q_2, P_{\tp \tm}) 
  \, , \nonumber
\end{align}
We apply Lemma \ref{l.6.3} to 
the product of the first factors on the right hand side of 
\eqref{a3g}, \eqref{a3f}, changing the places of
$Q_1, Q_2, P_{\tm \tp}$ with the result 
\be \label{a3i}
\ualpha(N \! + \! Q_2 \! + \! P_{\tp \tp} \! + \!
P_{\tm \tm}|Q_1, P_{\tm \tp}) \
\ualpha(N \! + \! P_{\tp \tp} \! + \! P_{\tm \tm}|P_{\tm \tp}, Q_2) \, .
\ee
For the product of the second factors we obtain  
\be
\label{a3h}
\ualpha(N \! + \! Q_2 \! + \! P_{\tp \tp} \! +
\! P_{\tm \tm} \! + \! P_{\tm \tp}|P_{\tp \tm}, Q_1) \ 
\ualpha(N \! + \! P_{\tp \tp} \! + \! P_{\tm \tm} \! + \!
P_{\tm \tp}|Q_2, P_{\tp \tm}) \, .
\ee
Now the product of the first factors in \eqref{a3i} and \eqref{a3h} 
coincides by definition with the extended phase 
\be \label{e.6.27}
\oalpha(N \! + \! Q_2 \! + \! P_{\tp \tp} \! + \! P_{\tm \tm}|
P_{\tp \tm} \! + \! P_{\tm \tp}, Q_1) \, ,
\ee
and the product of the second factors in \eqref{a3i} and \eqref{a3h} yields 
\be \label{e.6.28} 
\oalpha(N \! + \! P_{\tp \tp} \! + \! P_{\tm \tm}|
P_{\tp \tm} \! + \! P_{\tm \tp}, Q_2)\ .
\ee
Since the product of \eqref{e.6.27} and \eqref{e.6.28}
coincides with the product of the second factors in
\eqref{a3b} and \eqref{a3a}, we conclude that this product
is also symmetric in $Q_1$, $Q_2$.
Noting once again that the phase factors,  
which appeared in intermediate steps,
were well defined for the respective triples in 
$\Qzw(\Kcc)$,
the proof of equality \eqref{e.6.13} is complete. \qed
\end{proof}

In a final step, we extend
$\oalpha(N|P,Q)$ to triples $P,Q,N \in \Qzw(\Kcc)$, where  
$P,Q$ have arbitrary disjoint supports,
\viz we also admit functionals $Q$ whose
support is not causally closed. 
Let $\Nc_1, \dots, \Nc_n$ be an open covering of $\supp Q$ such that
$\Jcap{(\Nc_i)} \, \cap \, \supp P = \emptyset$, $i= 1, \dots , n$.
Picking a corresponding partition of unity 
by test functions $\chi_i$, 
we obtain a decomposition $Q = Q_1 + \cdots + Q_n$ 
with $Q_i \doteq \chi_i \, Q$, $i = 1, \dots ,n$. We then put  
\begin{align} \label{e.6.29}
  & \oalpha(N|P,Q)  \\
  & \doteq \oalpha(N|P, Q_1) \ \oalpha(N \! + \! Q_1|P, Q_2)
  \ \oalpha(N \! + \! Q_1 \! + \! \cdots \! + \! Q_{n-1}| P, Q_n) 
  \nonumber \, .
\end{align}  

It follows from relation \eqref{e.6.13} that the right hand side of this
equality does not change if one exchanges the positions of
$Q_i$ and $Q_{i + 1}$. Hence it is stable
under arbitrary permutations of the $Q_i$, $i = 1, \dots , n-1$. 
It also does not depend on the chosen partition of unity,
as we will show next.

\medskip 
Let $\rho_i$, $i = 1, \dots , n$,  be another 
partition of unity for the chosen covering. We first consider the
cases where $\rho_i + \rho_j = \chi_i + \chi_j$ for some
pair {$i \neq j$} and all other test functions
coincide, $\rho_k = \chi_k$, $k \neq i,j$. 
According to the preceding observation, we may
reorder the indices and assume $i=1$, $j=2$.
Putting $R \doteq (\rho_1 - \chi_1) \, Q$, we obtain 
$(Q_1 + R) = \rho_2 \, Q$, \, $(Q_2 - R) = \rho_1 \, Q$.
Since $\supp P \, \cap  \, \supp \rho_i  Q = \emptyset $,
$i = 1,2$, we can apply relation~\eqref{e.6.12}, giving 
 \begin{align}
   & \oalpha(N|P,Q_1 \! +  \! R) \
   \oalpha(N  \!  + \!   Q_1  \! + \! R|P, Q_2 \! -  \!R) \\
 & = \oalpha(N|P, Q_1) \ 
 \underbrace{\oalpha(N \! + \! Q_1|P, R) \
   \oalpha(N \! + \! Q_1|P, R)^{-1}}_{=1} \
   \oalpha(N \! + \! Q_1|P, Q_2) \, . \nonumber
 \end{align} 
 We conclude that under these special changes of the partition of
 unity, the right hand side of definition \eqref{e.6.29}   
 does not change. But any other partition
 of unity can be reached in a finite number of steps from
 partitions of this special type, so the definition
 does not depend on it either. 

\medskip
Finally, the definition is also independent of the chosen covering. To
see this we proceed to refinements of the given covering and
corresponding refinements of the decompositions of the
functionals. Let, for example, $Q_1 = Q_{11} + Q_{12}$ be
such a refinement. Splitting \mbox{$P = P_+ + P_-$},
where $\supp P_\pm$ does not intersect $\Jmp{(\supp Q_1)}$,
respectively, we have 
\be
\oalpha(N|P,Q_1) = \ualpha(N|P_+, Q_1) \ \ualpha(N \! + \! P_+|Q_1, P_-) \, .
\ee
According to Lem\-ma~\ref{l.6.1}, the factors appearing on the right hand side
can be split into 
\begin{align}
&  \ualpha(N|P_+, Q_1) = \ualpha(N| P_+, Q_{11}) \
  \ualpha(N + Q_{11}|P_+, Q_{12}) \, , \\
&  \ualpha(N + P_+|Q_1, P_-) = \ualpha(N + P_+|Q_{11}, P_-) \
  \ualpha(N + Q_{11} + P_+|Q_{12}, P_-) \, . \nonumber 
\end{align}  
The product of the first factors on the right hand sides of these 
equalities gives $\oalpha(N|P, Q_{11})$ and that of the second
factors $\oalpha(N + Q_{11}|P, Q_{12})$. Thus we arrive at 
\be
\oalpha(N|P,Q_1) = \oalpha(N|P,Q_{11}) \ \oalpha(N + Q_{11}|P, Q_{12}) \, .
\ee
Iterating this argument, we see that definition \eqref{e.6.29} 
is invariant under finite refinements of the covering.
Since any two coverings have a joint refinement, it follows
that the extension of $\oalpha(N|P,Q)$ is well defined if $P, Q$ have disjoint
supports and all (sums of the) functionals are contained in~$\Qzw(\Kcc)$. 
The preceding results are used in the proof of the
following proposition. 

\begin{proposition} \label{p.6.5}
  The phase factors $\alpha$ appearing in Proposition \ref{p.5.1} can
  be extended to phases $\oalphao$ which are defined
  for admissible triples {$P,Q,N \in \Qc$}
  with $\supp P \bigcap \supp Q = \emptyset$ and satisfy 

\vspace*{-6mm}  
\begin{align} \label{e.6.33} 
\oalphao(N|P,Q) & = \oalphao(N|Q,P) \, , \\    
\label{e.6.34}
\oalphao(N|P_1 + P_2, Q) & = \oalphao(N|P_1, Q) \  
\oalphao(N + P_1|P_2, Q) \, .   
\end{align}

\vspace*{-1mm} \noindent 
These equalities uniquely fix this extension.
\end{proposition}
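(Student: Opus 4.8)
The plan is to pass from the family of ``$c$-local'' extended phases $\oalpha(N|P,Q)$, constructed above for admissible triples in $\Qzw(\Kcc)$ with $\supp P \cap \supp Q = \emptyset$, to a single globally defined phase $\oalphao$. First I would observe that the sets $\Qzw(\Kcc)$ exhaust $\Qzw$: since any finite collection of functionals in $\Qzw$ together with their sums takes values in some compact convex contractible $\Kcc$ for $c$ large enough (cf.\ the appendix), every admissible triple $P,Q,N \in \Qc$ with disjoint $\supp P, \supp Q$ lies in $\Qzw(\Kcc)$ for all sufficiently large $c$. I would therefore \emph{define} $\oalphao(N|P,Q) \doteq \oalpha(N|P,Q)$ computed in any such $\Kcc$, and the first substantive step is to check this is \emph{consistent}, i.e.\ independent of the choice of $c$: for $c \le c'$ one has $\Qzw(\Kcc) \subset \Qzw(\Kc^{\, c'})$, so the $c'$-phase restricts to the $c$-phase on overlaps. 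This needs that the \emph{restricted} phases $\ualpha$ agree under enlargement of $c$ — but $\ualpha(N|P,Q)$ was obtained from the single abstract phase $\alpha(N|P,Q)$ of Proposition~\ref{p.5.1} by imposing the stronger causality relation $P \csucc Q$, and $\alpha$ itself does not depend on $c$; hence the extension formulas \eqref{e.6.10}, \eqref{e.6.29}, being built from $\ualpha$ by convex splittings, give the same answer in $\Kcc$ and in $\Kc^{\, c'}$. So $\oalphao$ is well defined on all admissible triples with disjoint supports.

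Next I would establish the two stated identities. Symmetry \eqref{e.6.33}: for a given admissible triple with $\supp P \cap \supp Q = \emptyset$, choose $c$ with $P,Q,N$ (and the sum) in $\Qzw(\Kcc)$; since we may always split $P = P_+ + P_-$ with $\supp P_\pm \cap \Jmp{(\supp Q)} = \emptyset$ (using that $\supp Q$ is compact, so $\Jcap{(\supp Q)}$ is as well, and $\supp P$ misses it for $c$ large, as $\eta^{\, c}$ dominates the relevant metrics), definition \eqref{e.6.10} already manifests $\oalpha(N|P,Q) = \oalpha(N|Q,P)$ when $Q$ has causally closed support; the general case follows from the partition-of-unity definition \eqref{e.6.29} together with the symmetry \eqref{e.6.13} already proved in Lemma~\ref{l.6.4}, which lets one interchange the roles of the pieces of $P$ and of $Q$. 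The cocycle identity \eqref{e.6.34} is exactly relation \eqref{e.6.12} of Lemma~\ref{l.6.4} extended from causally closed $Q$ to arbitrary disjoint $\supp Q$ via \eqref{e.6.29}: one decomposes $Q = \sum Q_i$ along a covering by sets whose $c$-causal hulls miss $\supp P$, applies \eqref{e.6.12} to each slot, and reassembles; the independence of \eqref{e.6.29} from the covering and partition (already verified above) guarantees the result is unambiguous. Finally, $\oalphao$ restricts to $\alpha$ on the latter's domain: when $P \underset{N}{\succ} Q$ in Minkowski space \emph{and} one additionally has $P \csucc Q$ for the chosen $c$, property (i) of the extension (shown right after \eqref{e.6.10}) gives $\oalpha(N|P,Q) = \ualpha(N|P,Q) = \alpha(N|P,Q)$; a general $\underset{0}{\succ}$-related triple is reduced to this by enlarging $c$, since $\eta^{\, c} \to \eta$ controls the cones and for $c$ near a sufficiently large value $\csucc$ is implied.

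Uniqueness is the cleanest part: suppose $\oalphao'$ is another phase defined on admissible triples with disjoint supports, agreeing with $\alpha$ where the latter is defined and satisfying \eqref{e.6.33}–\eqref{e.6.34}. Given any such triple, split $P = P_+ + P_-$ with $\supp P_\pm \cap \Jmp{(\supp Q)} = \emptyset$ (choosing $c$ as above); then $P_+ \underset{0}{\succ} Q$ and $Q \underset{0}{\succ} P_-$, so on the pieces $\oalphao'$ must equal $\alpha$. By \eqref{e.6.34} applied to $P = P_+ + P_-$ and \eqref{e.6.33} used to flip $Q$ past $P_-$, $\oalphao'(N|P,Q)$ is forced to equal $\alpha(N|P_+,Q)\,\alpha(N+P_+|Q,P_-) = \oalpha(N|P_+,Q)\,\oalpha(N+P_+|Q,P_-) = \oalphao(N|P,Q)$ when $\supp Q$ is causally closed; the partition-of-unity formula \eqref{e.6.29}, itself a consequence of \eqref{e.6.34}, then pins down the general case. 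I expect the main obstacle to be purely bookkeeping: verifying that in every intermediate splitting all the phase factors invoked remain \emph{well defined} in the technical sense (admissible triples in $\Qzw(\Kcc)$ with the correct strict causal separations), which is where convexity and contractibility of $\Kcc$ and the domination property of $\eta^{\, c}$ are used repeatedly; there is no new conceptual difficulty beyond Lemmas~\ref{l.6.1}, \ref{l.6.3}, \ref{l.6.4}.
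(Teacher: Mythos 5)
The central obligation in Proposition \ref{p.6.5} is not the construction of $\oalphao$ on triples with disjoint supports but the claim that it \emph{extends} $\alpha$, i.e.\ that the extension agrees with $\alpha(N|P,Q)$ whenever $P \underset{N}{\succ} Q$, and precisely this step is missing from your proposal. Your reduction ``by enlarging $c$'' is backwards: the cones $J^{\,c}_\pm$ grow with $c$, so $P \overset{\widehat{c}}{\succ} Q$ implies $P \overset{c}{\succ} Q$ for $\widehat{c} \geq c$ and the relation $\csucc$ becomes a \emph{stronger} condition as $c$ increases, while $c$ is forced to be large so that $P,Q,N$ and their sums lie in $\Qzw(\Kcc)$ and $\eta^{\,c}$ dominates the perturbed metrics. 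Hence there is in general no admissible $c$ for which $P \underset{N}{\succ} Q$ (note also that the domain of $\alpha$ is governed by the $N$-dependent relation $\underset{N}{\succ}$, not by $\underset{0}{\succ}$) implies $P \csucc Q$. The paper's proof deals with exactly this difficulty: one splits $P=\sum_i P_i$, $Q=\sum_j Q_j$ so finely that each pair satisfies either $P_i \csucc Q_j$ or ($Q_j \csucc P_i$, which together with $P \underset{N}{\succ} Q$ and the domination of $g_N$ by $\eta^{\,c}$ forces $P_i \underset{N}{\perp} Q_j$); one verifies by a geometric argument that the intermediate triples stay in the domain of $\alpha$, i.e.\ $P_i \underset{N_{ij}}{\succ} Q_j$ for $N_{ij}=N+\sum_{k<i}P_k+\sum_{l<j}Q_l$; and one then uses the symmetry of $\alpha$ for spacelike separated functionals (Proposition \ref{p.5.1}, resting on Haag duality) to identify each factor of the decomposition \eqref{e.6.54} with $\ualpha$, hence with $\oalpha$. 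Without this argument — in particular without invoking the spacelike symmetry of $\alpha$ — the restriction property is unproven, and the later trivialization of $\alpha$ (the Corollary to Proposition \ref{p.6.7}) would not follow.

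Two further steps are asserted rather than proved. The independence of the extension from $c$ does not follow merely from the fact that the restricted phases for different $c$ are restrictions of the same $\alpha$: the splittings and coverings admitted in \eqref{e.6.10} and \eqref{e.6.29} depend on $c$, and the paper compares $\oalpha$ with $\oalphao^{\widehat{c}}$ by yet another decomposition into $\widehat{c}$-ordered pieces, using the already established symmetry of the extended phase for the pieces ordered the ``wrong'' way. Likewise, the full symmetry \eqref{e.6.33} is not a consequence of \eqref{e.6.13}, which only permutes pieces occupying the \emph{third} entry; in the paper it is derived from the product formula \eqref{e.6.54} for $\oalpha$ together with the symmetry of each factor noted after \eqref{e.6.10}. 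Your uniqueness argument, by contrast, is essentially the paper's (the two identities reduce $\oalphao$ to values of $\alpha$ on $\csucc$-ordered pieces) and is fine, but the heart of the proposition — the coincidence with $\alpha$ on its domain and the $c$-independence — needs the decomposition-plus-Haag-duality argument you have omitted.
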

\begin{proof}
  Let $P,Q,N$ be any admissible triple with 
  $\supp P \bigcap \supp Q = \emptyset$. There exists some
  $c \geq 1$ such that $P,Q,N \in \Kcc$. As shown above,
  the restriction $\ualpha$ of $\alpha$ to admissible
  triples $P', Q', N' \in \Kcc$ satisfying
  $P' \csucc Q'$ can be 
  extended to
  phases $\oalpha$ which are defined on all admissible triples
  in $\Kcc$  with  $\supp P' \bigcap \supp Q' = \emptyset$.
  We shall see that this extension satisfies 
  the two equalities stated above.
  It is then clear that it is  unique because these equalities 
  comprise the defining equation for $\oalpha$ in terms of the
  restricted phases~$\ualpha$.
   
\medskip
In a first step we show                              
that the extended phase $\oalpha$ coincides
with the orig\-inal phase $\alpha$ on its full domain in  $\Qc(\Kcc)$. 
So let $P,Q,N \in \Kcc$ with 
$P \underset{N}{\succ} Q$. 
Any pair 
$(x,y) \in \supp P \times \supp Q$ satisfies either
$x \csucc y$, or $y \csucc x$. In the latter case, the point 
$x$ is spacelike separated from $y$ with regard to
the metric $g_N$ induced by $N$, $x \underset{N}{\perp} y$.
Thus, since the
supports of $P$, $Q$ are compact, we can split these
functionals with the help of suitable partitions
of unity into finite sums $P = \sum_i P_i$,
$Q = \sum_j Q_j$,  such that either
$P_i \csucc Q_j$, or $Q_j \csucc P_i$ and
$P_i \underset{N}{\perp} Q_j$. By repeated application 
of the basic Lemma \ref{l.6.1}, we obtain a  corresponding
decomposition of the phase $\alpha(N|P,Q)$, given by  
\be \label{e.6.54}
\alpha(N|P,Q) = \Pi_{i,j} \ \alpha(N + \sum_{k < i} P_k + \sum_{l < j} Q_l \, |
\, P_i, Q_j) \, .
\ee

The phases appearing on the right hand side
of this equality are
well defined, which  can be seen as follows: the causal structure
induced by their first entries coincides in the complement
of $\supp P \, \cup \, \supp Q$ with the causal structure fixed by $N$.
Thus, any future directed curve, emanating from a given point
in the support of $P$, will hit its boundary and
then propagate in positive timelike directions, fixed by the
metric $g_N$. Since $P \underset{N}{\succ} Q$, it will not reach the
past of $Q$. An analogous statement holds for past
directed curves emanating from points in the support
of $Q$. Hence,
putting $N_{i j} \doteq N + \sum_{k < i} P_k + \sum_{l < j} Q_l$,
one has $P_i \underset{N_{i j}}{\succ} Q_j$, as claimed.

Now factors in relation \eqref{e.6.54}, involving 
triples with $P_i \csucc Q_j$, coincide with the
restricted phase $\ualpha$ for these triples, whence with its 
extension $\oalpha$. If $Q_j \csucc P_i$, hence
\mbox{$P_i \underset{N}{\perp} Q_j$}, the entries in $\alpha$
can be interchanged because of the symmetry properties of $\alpha$
for functionals with spacelike separated supports.
So also in these cases the phase coincides with $\oalpha$ 
for the respective functionals. Since 
equality \eqref{e.6.54} holds also for the
extended phase, it follows that $\oalpha$
coincides with the original phase $\alpha$ 
on its domain.

\medskip
 Making use of relation \eqref{e.6.54}   
for the extended phase $\oalpha$ and noticing that the 
supports of the functionals $P_i, Q_j$ satisfy the
conditions stated after the defining equation \eqref{e.6.10},
it is apparent that the right hand side of
relation~\eqref{e.6.54} for $\oalpha$ is symmetric 
in $P,Q$. This proves equality \eqref{e.6.33} for $\oalpha$.
Since the first part of Lemma \ref{l.6.4} entails 
equality \eqref{e.6.34} for $\oalpha$, this establishes 
the two relations given in the statement
in case of $\oalpha$. 

\medskip
In the last step we show that for given admissible
triples $P,Q,N \in \Qc$, the extension of $\alpha$
does not depend on the value of $c$ chosen for the
embedding of the triple into $\Kcc$. So let
$\widehat{c} \geq c \geq 1$, hence $\Kc^{\widehat{c}} \supset \Kcc$. For pairs
$P,Q \in \Kcc$ the relation $P \overset{\widehat{c}}{\succ} Q$
implies $P \overset{{c}}{\succ} Q$. Hence
$\ualphao^{\widehat{c}}$ coincides with $\ualpha$ on
all admissible triples in $\Kcc$ 
satisfying the stronger causality condition. 
We proceed now as in the preceding step and decompose
$P = \sum_i P_i$ , $Q = \sum_j  Q_j$ such that for each pair
$(i,j)$ at least one of the relations
$P_i \overset{\widehat{c}}{\succ} Q_j$
or $Q_j \overset{\widehat{c}}{\succ} P_i$
holds. Adopting the notation in the preceding step,
we find in the first case 
\begin{align}
\oalpha(N_{ij}|P_i, Q_j) & = \ualpha(N_{ij}|P_i, Q_j) \nonumber \\
& = \ualphao^{\widehat{c}}(N_{ij}|P_i, Q_j) =
\oalphao^{\widehat{c}}(N_{ij}|P_i, Q_j) \, .
\end{align}
In the second case we obtain, 
bearing in mind the symmetry properties of the extended
phases in their second and third argument,
\begin{align}
\oalpha(N_{ij}|P_i, Q_j) & =  \oalpha(N_{ij}|Q_j, P_i) =
  \ualpha(N_{ij}|Q_j, P_i)  \\
& = \ualphao^{\widehat{c}}(N_{ij}|Q_j, P_i) =
  \oalphao^{\widehat{c}}(N_{ij}|Q_j, P_i) = 
  \oalphao^{\widehat{c}}(N_{ij}|P_i, Q_j) \, . \nonumber 
\end{align}
Thus, by another decomposition based on Lemma \ref{l.6.1}, we arrive at 
\begin{align}
  \oalpha(N|P,Q) & = \Pi_{i j} \, \oalpha(N_{i j}|P_i, Q_j) \nonumber \\
  & = \Pi_{i j} \, \oalphao^{\widehat{c}}(N_{i j}|P_i, Q_j)
  = \oalphao^{\widehat{c}}(N|P,Q) \, .
\end{align}    
So the phases $\oalpha$ are restrictions to $\Qc(\Kc^c)$, $c \geq 1$, 
of a phase $\overline{\alpha}$ which is defined on
all of $\Qc$, completing the proof. \qed
\end{proof}

We will show now that the extended
phases $\oalphao$ can be trivialized. Trivial solutions of the
equalities in the preceding proposition are obtained by
picking phases $\beta(P) \in \TT$, $P \in \Qc$,
and putting
\be \label{e.6.35}
\delta \beta(N|P,Q) \doteq 
\beta(P+N)^{-1} \beta(N) \beta(Q+N)^{-1} \beta(P + Q + N)
\, .
\ee
They correspond to coboundaries in cohomology theory. 
Thus we need to exhibit phase factors $\beta$ for which $\oalphao$
can be represented in this form. The construction of these
phase factors will be accomplished in successive steps. Namely, 
we will adjust the phases $\beta$ for increasing subsets of functionals
in $\Qzw$ 
such that the preceding equality is satisfied in each step
by $\oalphao$, restricted to the respective subsets of
functionals. The desired result is then obtained by some
limiting argument. 

\medskip
It will be convenient to describe this procedure by
an iterative scheme. To this end we multiply $\oalphao$ 
with the inverse of \eqref{e.6.35},
involving the phases determined in each step,
$\oalphao(N|P,Q) \mapsto \oalphao(N|P,Q) \, \delta \beta(N|P,Q)^{-1}$.
The resulting phases still satisfy both equations in the
preceding proposition and are equal to~$1$ on increasing
subsets of functionals. From the point of view of
cohomology theory, we are staying by this procedure
in the cohomology class of $\oalphao$. We therefore 
denote the phase factors, being modified in this
manner, again by $\oalphao$.

\medskip
Turning to the construction, let $\Reg_1,\Reg_2 \subset \Mc$
be disjoint compact regions and let
$\chi_0, \chi_1, \chi_2$  be a partition of unity such that
$\chi_1, \chi_2$ have disjoint supports, are
equal to $1$ on $\Reg_1$, respectively $\Reg_2$,
and \mbox{$\chi_0 = 1 - \chi_1 - \chi_2$}.
Let $P,Q,N \in \Qzw$ be any admissible triple
such that $\supp P \Subset \Reg_1$, $\supp Q \Subset \Reg_2$,
where the symbol~$\Subset$ indicates that the supports are
contained in the open interior of the given regions. 
Setting $N_j \doteq \chi_j N$, \mbox{$j=0,1,2$}, it follows from 
equations \eqref{e.6.34} and \eqref{e.6.33} that
\begin{align} \label{e.6.36}
\oalphao(N|P, Q) & =
\oalphao(N_0 \! +  \! N_2| P \!  +  \! N_1, Q) \
\oalphao(N_0  \! +   \! N_2| N_1, Q)^{-1}  \nonumber \\
\oalphao(N_0 \! +  \! N_2| P \!  +  \! N_1, Q) \
 & =
\oalphao(N_0|P  \! +  \! N_1, Q  \! +  \! N_2) \
\oalphao(N_0|P  \! +  \! N_1, N_2)^{-1}  \nonumber  \\
\quad \oalphao(N_0  \! +   \! N_2| N_1, Q)^{-1}
& = \oalphao(N_0|N_1, Q  \! +  \! N_2)^{-1} 
\ \oalphao(N_0|N_1, N_2) \, .
\end{align}
With this input, we define  
$\beta(R) \doteq \alpha(\chi_0 R|\chi_1 R, \chi_2 R)$
for $R \in \Qzw$. Making use of the fact that 
$\chi_0 P = \chi_2 P = 0$ and $\chi_0 Q = \chi_1 Q = 0$, 
the equalities \eqref{e.6.36} imply that
\be \label{e.6.37}
\oalphao(N|P,Q) = \beta(P + Q + N) \beta(P + N)^{-1}
\beta(Q + N)^{-1} \beta(N) 
\ee
for the restricted set of triples $P,Q,N$. Thus $\oalphao$ is trivial
for such triples. Multiplying
$\oalphao$ with the inverse of the right hand side, we
obtain an improved phase $\oalphao$ which still satisfies 
the equations in Proposition \ref{p.6.5} and, in addition, 
is equal to $1$ if $\supp P \Subset \Reg_1$,
$\supp Q \Subset \Reg_2$.

\medskip 
Given any $\oalphao$ with these properties, we repeat the
preceding procedure. So let $\Reg_3, \Reg_4 \subset \Mc$ be another pair
of disjoint compact regions 
and let $\chi'_0, \chi_3, \chi_4$ be a corresponding
partition of unity. As in the preceding step, we put 
\mbox{$\beta(R) \doteq \oalphao(\chi'_0 R|\chi_3 R, \chi_4 R)$} 
for $R \in \Qzw$.
Thus $\oalphao$ satisfies equation~\eqref{e.6.37}
for the respective triples. Multiplying it with
the inverse of the right hand side,  we obtain
a modified phase $\oalphao$ which satisfies
the equations in Proposition 6.5 and is
equal to $1$ if $\supp P \Subset \Reg_3$,
\mbox{$\supp Q \Subset \Reg_4$}. As a matter of fact, it turns
out that this modified phase is
still equal to $1$ also for the original triples $P,Q,N$
satisfying  $\supp P \Subset \Reg_1$,
$\supp Q \Subset \Reg_2$.

\medskip
Making use of the properties of the improved phases $\oalphao$,
established in the preceding step, and of Proposition \ref{p.6.5}, we have for
admissible $P,Q,R,N$ with $\supp P \Subset \Reg_1$,
$\supp Q \Subset \Reg_2$
\begin{align}  \label{e.6.38} 
  &  \oalphao(N + P|R,Q) 
   \, = \, \oalphao(N|P + R,Q) \ \oalphao(N|P,Q)^{-1} \nonumber \\
&  =  \, \oalphao(N|P + R,Q)  
   \, = \, \oalphao(N|R,Q) \ \oalphao(N + R|P,Q) \nonumber \\
&  = \, \oalphao(N| R,Q) \, . 
\end{align}
This equality will be used at several points in the 
proof of the following important lemma. 
\begin{lemma} \label{l.6.6}
  Let $\beta$ be the phases, determined in the preceding step 
  for disjoint
  regions $\Reg_3, \Reg_4$ from a given $\oalphao$, which is equal to $1$ 
  on pairs of functionals with support in
  disjoint regions $\Reg_1, \Reg_2$. Then
  \be \label{e.6.39}
  \beta(P + Q + N) \beta(P + N)^{-1}
\beta(Q + N)^{-1} \beta(N) = 1
\ee
for admissible triples $P,Q,N \in \Qzw$ with
$\supp P \Subset \Reg_1$, $\supp Q \Subset \Reg_2$.
\end{lemma}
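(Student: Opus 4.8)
The plan is to unfold the four phase factors in \eqref{e.6.39}, expand each of them by the cocycle relations \eqref{e.6.33}--\eqref{e.6.34}, cancel as much as possible, and thereby reduce \eqref{e.6.39} to a single residual identity that is then disposed of with the help of \eqref{e.6.38}. Write $\chi_0',\chi_3,\chi_4$ for the partition of unity fixing $\beta$, so that $\beta(R)=\oalphao(\chi_0'R\,|\,\chi_3R,\chi_4R)$, and put $R_0\doteq\chi_0'R$, $R_3\doteq\chi_3R$, $R_4\doteq\chi_4R$ for $R\in\{P,Q,N\}$ and for their sums. Because $\supp P\Subset\Reg_1$ and $\supp Q\Subset\Reg_2$, the pieces $P_0,P_3,P_4$ have their supports in the interior of $\Reg_1$ and $Q_0,Q_3,Q_4$ in that of $\Reg_2$; the $\chi_3$- and $\chi_4$-pieces have disjoint supports, so every triple fed to $\oalphao$ below has disjoint second and third arguments, and every sum occurring is a convex combination with coefficients $\le 1$ of the admissible functionals $P,Q,N$, hence admissible by convexity of the standing assumption.

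I would first record two families of manipulation rules. Combining \eqref{e.6.34} with the symmetry \eqref{e.6.33} produces a cocycle relation in the third argument of $\oalphao$, and iterating both gives the bi-expansion $\oalphao\big(M\,\big|\,\sum_iX_i,\sum_jY_j\big)=\prod_{i,j}\oalphao\big(M+\sum_{i'<i}X_{i'}+\sum_{j'<j}Y_{j'}\,\big|\,X_i,Y_j\big)$, valid for any orderings of the pieces. From the hypothesis that $\oalphao(M|X,Y)=1$ whenever $\supp X$ lies in $\Reg_1$ and $\supp Y$ in $\Reg_2$, or vice versa --- which is exactly how \eqref{e.6.38} was derived from it --- one obtains the stripping rules: if one of the two arguments $X,Y$ of $\oalphao(M|X,Y)$ is supported in $\Reg_2$, then a functional supported in $\Reg_1$ may be freely added to, or removed from, the first argument and the other of $X,Y$ without changing the value; symmetrically with $\Reg_1$ and $\Reg_2$ interchanged.

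Applying the bi-expansion with consistent orderings to $\beta(P+Q+N)$, $\beta(P+N)$, $\beta(Q+N)$ and $\beta(N)$, and reducing each elementary factor by the stripping rules, I expect in $\delta\beta(N|P,Q)=\beta(P+N)^{-1}\beta(N)\beta(Q+N)^{-1}\beta(P+Q+N)$ the following: the factors of $\beta(P+Q+N)$ carrying a $P$-piece in one of $X,Y$ and a $Q$-piece in the other are equal to $1$ outright; the reduced ``$P$-only'' and ``$Q$-only'' factors cancel against the corresponding factors of $\beta(P+N)^{-1}$ and $\beta(Q+N)^{-1}$; and what is left is the neutral residual $\delta\beta(N|P,Q)=\delta f(N_0|P,Q)$, where $f(M)\doteq\oalphao(M|N_3,N_4)$ and, as in \eqref{e.6.35}, $\delta f(N_0|P,Q)=f(P+N_0)^{-1}f(N_0)f(Q+N_0)^{-1}f(P+Q+N_0)$.

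It then remains to show $\delta f(N_0|P,Q)=1$, and this is the step that truly uses strict interiority of the supports rather than mere disjointness. Choose a partition of unity $\psi_0,\psi_1,\psi_2$ with $\supp\psi_1\Subset\Reg_1$, $\supp\psi_2\Subset\Reg_2$, with $\psi_1=1$ on a neighbourhood of $\supp P$ and $\psi_2=1$ on a neighbourhood of $\supp Q$, and $\psi_0=1-\psi_1-\psi_2$; then $P=\psi_1P$, $Q=\psi_2Q$, and $\supp\psi_0$ is disjoint from neighbourhoods of $\supp P$ and of $\supp Q$. Decomposing $N_3=\sum_\lambda\psi_\lambda N_3$, $N_4=\sum_\mu\psi_\mu N_4$ and applying the bi-expansion, $f(M)=\prod_{\lambda,\mu}\oalphao(M+c_{\lambda\mu}\,|\,\psi_\lambda N_3,\psi_\mu N_4)$ for fixed functionals $c_{\lambda\mu}$, whence $\delta f(N_0|P,Q)=\prod_{\lambda,\mu}\delta f_{\lambda\mu}(N_0+c_{\lambda\mu}|P,Q)$ with $f_{\lambda\mu}(M)\doteq\oalphao(M|\psi_\lambda N_3,\psi_\mu N_4)$. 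For every $(\lambda,\mu)$ with $\lambda$ or $\mu$ in $\{1,2\}$ one argument of $f_{\lambda\mu}$ lies in $\Reg_1$ or in $\Reg_2$, so a stripping rule gives $f_{\lambda\mu}(Q+M)=f_{\lambda\mu}(M)$ or $f_{\lambda\mu}(P+M)=f_{\lambda\mu}(M)$, and then $\delta f_{\lambda\mu}(M|P,Q)=1$ by a one-line computation. The only surviving factor is $\delta f_{00}(M'|P,Q)$ with $f_{00}(M)=\oalphao(M|\psi_0N_3,\psi_0N_4)$, where now $\supp(\psi_0N_3)$ and $\supp(\psi_0N_4)$ are disjoint from $\supp P$ and from $\supp Q$. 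For this last factor I would move $Q$ out of the first argument into the third (by the third-argument cocycle, legitimate because $\supp Q$ is disjoint from $\supp(\psi_0N_3)$ and $\supp(\psi_0N_4)$) and then strip $P$ from the first argument using that the third argument has then become $Q$, supported in $\Reg_2$; the outcome is $f_{00}(Q+M')=f_{00}(M')\,B$ and $f_{00}(P+Q+M')=f_{00}(P+M')\,B$ with one and the same factor $B$, so $\delta f_{00}(M'|P,Q)=1$, and therefore $\delta\beta(N|P,Q)=1$, which is \eqref{e.6.39}. The main obstacle is exactly this final step: the combinatorics of the reduction is long but routine, whereas recognising the residual as a coboundary $\delta f$ and trivialising it is where the real content sits --- the pieces of $N$ that survive the stripping live away from both $\Reg_1$ and $\Reg_2$ and cannot be removed, so one is forced to introduce the $\Reg_1,\Reg_2$-adapted partition $\psi$ and then exploit the device of collapsing $P$ and $Q$ into a single argument.
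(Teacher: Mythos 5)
Your proposal is correct and uses essentially the same ingredients as the paper's proof: the bi-expansion of $\beta$ via the two cocycle relations of Proposition \ref{p.6.5}, the stripping relation \eqref{e.6.38} coming from the triviality of $\oalphao$ on $(\Reg_1,\Reg_2)$-pairs, a finer partition ($\psi_0,\psi_1,\psi_2$, the paper's $\rho_j$) adapted to $\supp P$, $\supp Q$ --- which is where strict interiority enters --- and the final move-into-the-third-entry-and-strip manoeuvre for the pure-$N$ factor. The only difference is bookkeeping: the paper factorizes $\beta(P+Q+N)=\beta_1(P,N)\,\beta_2(Q,N)$ and then specializes $P=0$, $Q=0$, whereas you cancel all four terms of $\delta\beta$ directly down to a residual coboundary of $f(M)=\oalphao(M|N_3,N_4)$ (with $P,Q$ or $P_0,Q_0$ in the entries, depending on the chosen ordering) and trivialize it; the content is the same.
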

\begin{proof}
We put 
$R_0 = \chi'_0 R$, $R_3 = \chi_3 R$ , $R_4 = \chi_4 R$, thus
$R_0 + R_3 + R_4 = R$, $R \in \Qzw$, and consider
for admissible triples $P,Q,N$ the phase 
\be
\beta(P + Q + N) = 
\oalphao(N_0 + P_0 + Q_0|N_3 + P_3 + Q_3, N_4 + P_4 + Q_4) \, .
\ee
Making use of the given support properties of $P,Q$, we will
split this expression with the help of Proposition \ref{p.6.5}
into a product of phases, where $P,Q$ do not
appear, both, in the same factor. 
This is a somewhat lengthy procedure. We begin by 
applying relation~\eqref{e.6.34} twice, giving 
\be
\oalphao(N_0 + P_0 + Q_0|N_3 + P_3 + Q_3, N_4 + P_4 + Q_4) 
= \alpha_1 \, \alpha_2 \, \alpha_3 \, \alpha_4 \, ,
\ee 
where 
\begin{align}
  \alpha_1 & = \oalphao(N_0 + N_3 + N_4 + P_0 + Q_0|P_3 + Q_3, P_4 + Q_4)\ ,
\nonumber \\
\alpha_2 & = \oalphao(N_0 + N_4 + P_0 + Q_0|N_3, P_4 + Q_4) \,
\nonumber \\
\alpha_3 & = \oalphao(N_0 + N_3 + P_0 + Q_0|P_3 + Q_3, N_4)\ ,
\nonumber \\
\alpha_4 & = \oalphao(N_0 + P_0 + Q_0|N_3, N_4) \, .
\end{align}
Turning to 
$\alpha_1$, we apply again relation \eqref{e.6.34} twice and obtain 
\begin{align}
  \alpha_1 = & \ \oalphao(N + P_0 + Q|P_3, P_4) \
  \oalphao(N + P_0 + Q_0 + Q_3|P_3, Q_4) \nonumber \\
& \! \! \cdot 
  \oalphao(N + P_0 + Q_0 + Q_4|Q_3, P_4) \ \oalphao(N + P_0 + Q_0|Q_3, Q_4) \, .
\end{align}
Since the second and third entries in the two 
middle factors have support in $\Reg_1$, respectively $\Reg_2$,
these factors are equal to~$1$. 
By relation \eqref{e.6.38} we can omit in the first factor $Q$ and
in the fourth factor $P_0$, hence 
\be
\alpha_1= \oalphao(N + P_0|P_3, P_4) \ \oalphao(N + Q_0|Q_3, Q_4) \, .
\ee
To the second factor $\alpha_2$ we apply both equalities 
Proposition \ref{p.6.5}, giving 
\be
\alpha_2 = \oalphao(N_0 + N_4 + P_0 + Q_0|N_3, P_4) \
\oalphao(N_0 + N_4 + P_0 + Q_0 + P_4|N_3, Q_4) \, .
\ee
According to relation \eqref{e.6.38} we can omit $Q_0$ in the first
factor and $P_0 + P_4$ in the second factor with the result
\be
\alpha_2 = \oalphao(N_0 + N_4 + P_0|N_3, P_4) \
\oalphao(N_0 + N_4 + Q_0|N_3, Q_4) \, .
\ee
The third factor $\alpha_3$ is treated similarly  and we find
\be
\alpha_3 = \oalphao(N_0 + N_3 + P_0|P_3, N_4) \
\oalphao(N_0 + N_3 + Q_0|Q_3, N_4) \, .
\ee

\medskip
We turn now to the factor $\alpha_4$. For its analysis, 
we need a finer resolution of the functional $N$.
To this end we choose a partition of unity 
$\rho_0 + \rho_1 + \rho_2 = 1$ such that $\supp\rho_1 \subset \Reg_1$,
$\supp\rho_2\subset \Reg_2$ and
$\supp \rho_0\cap(\supp P \cup \supp Q) = \emptyset$.
Since $P,Q$ have supports in the interior of the
respective regions, such a partition exists 
and the supports of $\rho_1 N$, $\rho_2 N$ are
contained in $\Reg_1$, respectively $\Reg_2$. 
We then consider the functionals
$N_i^j \doteq \rho_j \chi_i N$ for $i =3,4$ and $j = 0,1,2$.  
Decomposing $N_3, N_4$ in the second and third entry of
$\alpha_4$, we move the terms appearing in the
corresponding sums successively to the first entry
with the help of the two equalities in
Proposition \ref{p.6.5}. We thereby arrive at a
product of nine factors of the form
\be \label{e.6.48} 
\alpha_{j,k} \doteq \oalphao(P_0 + Q_0 + M_{j k}|N_3^j, N_4^k) \, ,
\quad j,k = 0,1,2 \, ,
\ee
where $M_{j k}$ is a sum of $N_0$ and certain specific terms in the
decomposition of $N_3$, $N_4$.
As a matter of fact, this assertion  
becomes more transparent by proceeding in reverse. 
Beginning with
$\oalphao(P_0 + Q_0 + N_0| N_3^0, N_4^0)$, one 
builds $\alpha_4$ by successive multiplication 
with appropriate factors $\alpha_{j k}$. The
first two steps are given in 
\begin{align} 
  & \oalphao(P_0 \! + \! Q_0 \! + \! N_0| N_3^0, N_4^0) \
  \oalphao(P_0 \! + \! Q_0 \! + \! N_0 \! + \! N_3^0| N_3^1, N_4^0)  \nonumber \\
  & = \oalphao(P_0 \! + \! Q_0 \! + \! N_0| N_3^0 + N_3^1, N_4^0)
  \, , \nonumber \\
  & \oalphao(P_0 \! + \! Q_0 \! + \! N_0| N_3^0 + N_3^1, N_4^0) \ 
  \oalphao(P_0 \! + \! Q_0 \! + \! N_0 + \! N_3^0 + \! N_3^1| N_3^2, N_4^0)
  \nonumber \\
  & = \oalphao(P_0 \! + \! Q_0 \! + \! N_0| N_3^0 \! + \! N_3^1 +N_3^2, N_4^0) \, .
\end{align}
One then proceeds in the same manner with $N_4$ in the third entries.
By this procedure, one ensures in particular that $M_{0 0} = N_0$. 

\medskip
Let us now have a closer look at the factors $\alpha_{j,k}$. Because of
the support properties of the operators $N_3^j, N_4^k$ for
$j,k = 1,2$, it follows from relation \eqref{e.6.38}
that we can omit $Q_0$  from $\alpha_{j,k}$ for
$j = 1$ as well as $k=1$. Similarly, for $j = 2$ or $k = 2$
we can omit $P_0$. Thus the resulting terms are again products of
phases depending only on $N,P$, respectively $N,Q$. 
There remains the case $j = k = 0$. Recalling that $M_{0 0} = N_0$,
we apply relation \eqref{e.6.34} and get
\begin{align} 
\alpha_{0 0} & = \oalphao(N_0 \! + \! P_0 \! + \! Q_0|N_3^0, N_4^0)  \\
& =\oalphao(N_0 \! + \! Q_0|P_0 \! + \! N_3^0, N_4^0) \, 
  \oalphao(N_0 \! + \! Q_0|P_0, N_4^0)^{-1} \nonumber \\ 
& =\oalphao(N_0 \! + Q_0 \! + N_3^0|P_0, N_4^0) \,
  \oalphao(N_0 \! + \! Q_0|N_3^0, N_4^0) \,
  \oalphao(N_0 \! + \! Q_0|P_0, N_4^0)^{-1}  \! . \nonumber 
\end{align}
Again by relation \eqref{e.6.38},
we can omit $Q_0$ in the first and the third factor
of the latter product. So $\alpha_{0 0}$ also factors into
a product of phases depending only on $N,P$, respectively
$N,Q$. So, to summarize, we succeeded in proving that
there exist phases $\beta_1(P,N)$, $\beta_2(Q,N)$, involving 
decompositions of their arguments
depending only on the given regions $\Reg_1, \dots, \Reg_4$, such that
\be
\beta(P + Q + N) =  \beta_1(P,N) \, \beta_2(Q,N) \, .
\ee
Making use of this equality also for the functional $P = 0$,
respectively $Q = 0$, it is straight forward to
verify relation~\eqref{e.6.39}, completing its proof.  \qed 
\end{proof}

By iteration of this argument, one can trivialize the
extended phase factors $\oalphao(N|P,Q)$ for admissible
triples $P,Q,N \in \Qc$, where $P,Q$
have their supports in any given finite number of 
pairs of disjoint compact regions. In order to
cover all such triples, we make use of Tychonoff's
theorem. Let $\fP$ be any finite collection of pairs
$\Oc^\prime \times \Oc^{\prime \prime}$
of disjoint compact subsets of $\Mc$. We denote by  
$\fB_{\fP}$ the set of maps
$\beta : \Qc \rightarrow \TT$ which trivialize
$\oalphao$ for the given subsets. Recalling the
definition of $\delta \beta$, cf.\ \eqref{e.6.35}, one has
\be
\oalphao(N|P,Q) \, \delta \beta(N|P,Q)^{-1} = 1 
\ee
if $\supp P \times \supp Q \subset \Oc^\prime \times \Oc^{\prime \prime}
\in \fP$. 
We have shown that the sets $\fB_{\fP}$ are not empty and
it is also clear that $\fB_{\fP_1} \subset \fB_{\fP_2}$
if $\fP_1 \supset \fP_2$. Let
\be
\fB \doteq \bigcap_\fP \fB_\fP \, .
\ee
This set is non-empty. Because, otherwise,  
due to the compactness of the set of maps
$\beta : \Qc  \rightarrow \TT$ with respect to the topology of pointwise
convergence (Tychonoff's Theorem), already a finite intersection
had to be empty, which has been excluded. 
Every $\beta \in \fB$ trivializes $\oalphao$. We also
note that different
elements differ by a local functional, \ie 
a map $\gamma : \Qc  \rightarrow \TT$ satisfying
$\delta \gamma = 1$. We have thus established the
following proposition.
\begin{proposition} \label{p.6.7} 
  Let 
  $\oalphao(N|P,Q)$ be the extended phases for admissible triples
  $P,Q,N \in \Qc$ satisfying
  $\supp P \cap \supp Q = \emptyset$. 
  There exists a function $\beta : \Qc \rightarrow \TT$
  such that 
 \be \nonumber
 \oalphao(N|P, Q)= \beta(P + N)^{-1} \, \beta(N) \, 
 \beta(Q + N)^{-1} \, \beta(P + Q + N) \, .
\ee 
\end{proposition}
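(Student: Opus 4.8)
The plan is to realize $\oalphao$ as a coboundary, i.e.\ to produce a map $\beta:\Qc\to\TT$ with $\oalphao=\delta\beta$ in the sense of \eqref{e.6.35}, not by writing $\beta$ in closed form but by assembling it from localized trivializations. The input is twofold. First, for a single pair of disjoint compact regions $\Reg_1,\Reg_2$ the recipe $\beta(R)\doteq\oalphao(\chi_0R|\chi_1R,\chi_2R)$ (with $\chi_j$ a partition of unity subordinate to the configuration) makes $\oalphao$ equal to $\delta\beta$ on all admissible triples with $\supp P\Subset\Reg_1$, $\supp Q\Subset\Reg_2$, as in \eqref{e.6.37}. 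Second, by Lemma~\ref{l.6.6}, after such an adjustment for $\Reg_1,\Reg_2$ one may perform a further adjustment for another disjoint pair $\Reg_3,\Reg_4$ without spoiling the first one. Iterating, for every finite collection $\fP$ of pairs of disjoint compact regions one obtains some $\beta$ trivializing $\oalphao$ simultaneously over all pairs in $\fP$; the cocycle relations \eqref{e.6.33}--\eqref{e.6.34} of Proposition~\ref{p.6.5} and the auxiliary identity \eqref{e.6.38} are what let this iteration go through.

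The obstacle is that the maps $\beta$ constructed for distinct finite collections need not agree, so there is no naive limit. I would circumvent this by working inside the compact space $\TT^{\Qc}$ of all maps $\Qc\to\TT$ with the product topology. For each finite $\fP$ put $\fB_\fP\doteq\{\beta\in\TT^{\Qc}:\ \oalphao(N|P,Q)\,\delta\beta(N|P,Q)^{-1}=1\ \text{ for all triples with }\ \supp P\times\supp Q\subset\Oc^\prime\times\Oc^{\prime\prime}\in\fP\}$. Each $\fB_\fP$ is closed, being carved out of $\TT^{\Qc}$ by equations that are continuous in $\beta$; it is non-empty by the iteration above; and $\fB_{\fP_1\cup\fP_2}\subset\fB_{\fP_1}\cap\fB_{\fP_2}$, so the family $\{\fB_\fP\}$ has the finite intersection property. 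Since $\TT^{\Qc}$ is compact by Tychonoff's theorem, $\fB\doteq\bigcap_\fP\fB_\fP$ is non-empty.

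It then remains only to observe that any $\beta\in\fB$ proves the claim: an arbitrary admissible triple $P,Q,N\in\Qc$ with $\supp P\cap\supp Q=\emptyset$ has, by compactness of the supports, $\supp P$ and $\supp Q$ contained in a pair of disjoint compact regions, which forms a one-element collection $\fP$, whence membership of $\beta$ in $\fB_\fP$ yields $\oalphao(N|P,Q)=\beta(P+N)^{-1}\beta(N)\beta(Q+N)^{-1}\beta(P+Q+N)$. I expect the genuinely delicate point to be not this soft packaging step but the iteration feeding it, namely the verification that each newly chosen local adjustment of $\beta$ leaves untouched the trivializations secured in earlier steps — precisely the content of Lemma~\ref{l.6.6}, whose proof rests on the cocycle identities of Proposition~\ref{p.6.5} together with \eqref{e.6.38}.
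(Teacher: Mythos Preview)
Your proposal is correct and follows essentially the same route as the paper: iterate the single-pair trivialization via Lemma~\ref{l.6.6} to handle any finite collection $\fP$ of disjoint compact pairs, then invoke Tychonoff's theorem on $\TT^{\Qc}$ to obtain a $\beta$ in the (non-empty) intersection $\bigcap_\fP\fB_\fP$. Your account is slightly more explicit about closedness of $\fB_\fP$ and the finite intersection property, but the argument is the same.
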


\medskip
As was shown in Proposition \ref{p.6.5},  
the phases $\alpha(N|P,Q)$ coincide with the restriction of 
$\oalphao(N|P,Q)$ to their domain, \ie on admissible triples
$P,Q,N \in \Qc$ satisfying $P \underset{N}{\succ} Q$.
Thus they can be trivialized, so 
the following corollary obtains. It completes the proof 
of Theorem \ref{t.6.2}. 

\begin{corollary}
  Let $\alpha(N|P,Q)$ be the causal phases,
  introduced in Proposition~\ref{p.5.1} for admissible triples
  $P,Q,N \in \Qzw$ satisfying $P \underset{N}{\succ} Q$.
  There exists a function $\beta: \Qzw \rightarrow \TT$
  such that
  \be \nonumber 
  \alpha(N|P,Q) = \beta(P + N)^{-1} \beta(N) \beta(Q + N)^{-1}
  \beta(P + Q + N) \, .
  \ee     
\end{corollary}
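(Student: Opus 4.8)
The plan is to read off the corollary directly from Propositions~\ref{p.6.5} and~\ref{p.6.7}; all the substantive work has already been done there, and what remains is merely to check that the domains of the various phases are nested correctly. First I would record the elementary observation that the causality condition $P \underset{N}{\succ} Q$ is strictly stronger than the disjoint--support condition under which the extended phase $\oalphao$ of Proposition~\ref{p.6.5} is defined: by definition $P \underset{N}{\succ} Q$ means $\supp P \cap J_-^N(\supp Q) = \emptyset$, and since $\supp Q \subseteq J_-^N(\supp Q)$ this already forces $\supp P \cap \supp Q = \emptyset$. Hence every admissible triple $P,Q,N \in \Qzw$ appearing in the statement lies in the domain of $\oalphao$.

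Next I would invoke the portion of Proposition~\ref{p.6.5} asserting that $\oalphao$ coincides with the original causal phase $\alpha$ on the whole domain of $\alpha$, i.e.\ on admissible triples satisfying $P \underset{N}{\succ} Q$, so that $\oalphao(N|P,Q) = \alpha(N|P,Q)$ there. Then I would apply Proposition~\ref{p.6.7}, which supplies a single function $\beta : \Qzw \rightarrow \TT$ (here $\Qzw = \Qc$) with
\be \nonumber
\oalphao(N|P,Q) = \beta(P+N)^{-1}\,\beta(N)\,\beta(Q+N)^{-1}\,\beta(P+Q+N)
\ee
for all admissible triples of disjoint support. Restricting this identity to the sub-collection of triples with $P \underset{N}{\succ} Q$ and replacing $\oalphao$ by $\alpha$ on the left-hand side produces exactly the factorization claimed, with the very same $\beta$.

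I do not expect any genuine obstacle in this argument: the cohomological heart of the matter---the extension of the restricted phases $\ualpha$ and their eventual trivialization by the Tychonoff compactness argument---is already packaged into Propositions~\ref{p.6.5} and~\ref{p.6.7}. The only point one must be careful not to skip is the trivial support inclusion in the first step, which is precisely what guarantees that the domain of the corollary is contained in the domain on which $\beta$ has already been shown to trivialize the extended phase $\oalphao$.
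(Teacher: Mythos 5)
Your proposal is correct and follows essentially the same route as the paper: the paper likewise observes that the causal phases $\alpha$ are the restriction of the extended phases $\oalphao$ of Proposition~\ref{p.6.5} to their domain (which, as you note, is contained in the disjoint-support domain since $\supp Q \subseteq J_-^N(\supp Q)$), and then applies the trivializing $\beta$ of Proposition~\ref{p.6.7}. No gaps.
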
  

We conclude this section with a remark on the
covariance properties of our construction.
As noted at the end of the preceding section, the
unitaries $U(\lambda) S(\biP) U(\lambda)^{-1}$ induce
automorphisms of the Weyl operators,
for any $\biP \in \Pc$ and Poincar\'e transformation $\lambda$. 
They exhaust the unitaries for perturbations $\biP_\lambda$ 
satisfying the standing assumption for any given 
time direction and they also satisfy the corresponding causal
factorization condition.
A fully covariant description would require, however, that
the phase factors $\beta$ in the preceding corollary
can be chosen to be Poincar\'e invariant for the given
(Poincar\'e invariant) $\alpha$. It is an open problem whether
such a choice  exists.

\section{Conclusions}

In this article we have extended the framework of dynamical C*-algebras
for quantum field theories on Minkowski space \cite{BF19}, 
admitting also kinetic perturbations. The novel feature appearing in
this extended framework is the influence of kinetic perturbations on the causal
factorization relations of the unitary operators, describing their
impact on states. Whereas these operators still
generate a local, covariant net on Minkowski space, labelled by
their support regions, the causal relations between them 
are affected.  This is due to the fact that they describe
the propagation of fields in distorted spacetimes. As a matter of
fact, this feature imposes restrictions on the admissible perturbations,
put  down in our standing assumption. 
They reflect the idea that the kinetic perturbations are caused
by gravitational effects on the fields. In accordance with
this idea, we have shown that the perturbed fields
satisfy wave equations and commutation relations on
locally perturbed Minkowski spaces. 

\medskip
The unitary operators describing these perturbations
are well defined at the level of abstract C*-algebras, which 
admit an abundance of states and corresponding Hilbert
space representations. Yet it is not clear from the outset that 
there exist also states, describing situations of physical
interest, such as a vacuum and its local excitations,
or equilibrium states. 
As a matter of fact, a comprehensive representation theory of
dynamical C*-algebras is the missing corner stone 
in a rigorous proof that interacting quantum field theories
exist in four spacetime dimensions \cite{BF19}. As was already
mentioned, perturbation theory is of little use in this 
context since it cannot converge in the presence
of kinetic perturbations, due to their impact on the
underlying causal structure and resulting modifications of
commutation relations. Thus a non-perturbative approach to this
problem is needed.

\medskip
As a step into that direction, we have considered 
the subset of perturbations, which are at most quadratic in the underlying
field. These perturbations do not describe self-interactions
of the field, but comprise its interaction with the
spacetime background and perturbations of its mass. 
Previous results by Wald \cite{Wald} had settled the existence of
corresponding unitary operators and resulting local nets of C*-algebras 
on Fock space. But a proof that by adjustment of their
phase factors there exist also operators
which satisfy the causal factorization relations did not
yet exist. In fact, it turned out to be surprisingly involved.

\medskip 
A direct construction of
such unitary operators would have required the
development of a non-perturbative renormalization scheme for
time-ordered exponentials. We have therefore taken here a different,
still cumbersome path. Adopting methods from  
cohomology theory, we have shown that the ambiguous phase factors
of the unitary operators can be fixed in a manner such 
that they satisfy the causal factorization equations,
\ie there are no cohomological obstructions in that respect. 
It completed our proof that the restricted dynamical
algebra is represented on Fock space in any number of spacetime dimensions.
This observation provides further evidence to the effect that our
novel algebraic approach to the construction of quantum field theories
is viable. 

\begin{appendix}
\section*{Appendix}
In this appendix we determine perturbations of the metric $\eta$ in
Minkowski space $\Mc$ which keep it globally hyperbolic, so that the
hypersurfaces $t=\const$ (for a fixed time coordinate) are still
Cauchy surfaces and the time coordinate is positive timelike
with regard to the perturbed metric $g$.
We also analyze in some detail their inverses,
which enter in the corresponding hyperbolic differential operators.
We will thereby justify our standing assumption and exhibit 
increasing families $\Kcc$ of perturbations,
labelled by the velocity of light $c \geq 1$, which enter in our analysis. 

\medskip 
Let $g$ be any such metric. We use the split into time and space and
describe $g$ by a block matrix 
\[ \label{e.a.1}
g = \left(\begin{array}{cc}
            g_{00}    & \bg \\
            \bg^T & - \, \biG
            \end{array}\right) \, , \tag{A.1}
\]
where $\bg$ is a $(d-1)$-vector 
and $\biG$ is a spatial $(d-1) \times (d-1)$-matrix.
According to the conditions on $g$, the chosen time
coordinate is still positive timelike, thus $g_{00} > 0$, and 
spatial vectors are still spacelike,
so $\biG$ has to be positive definite. The lightcone   
$\Vc_+(g)$ fixed by $g$ at any given point in $\Mc$
is determined by the equation for the corresponding
lightlike directions, $v = (1, \biv) \in \RR^d$,  
\be \label{e.a.2}
0 = g(v,v) = g_{00} + 2 \langle\biv,\bg\rangle -
\langle \biv,  \biG \biv \rangle \, .   \tag{A.2}
\ee
Since $\biG \geq \| \biG^{-1} \|^{-1} 1$ one finds that 
\be
|\biv|^2 \| \biG^{-1} \|^{-1} - 2 | \biv |  | \bg |
\, \leq \, g_{00} \, .
\tag{A.3}
\ee
It follows that the velocity of light, determined by $g$, satisfies the bound
\be
| \biv | \leq 
c \doteq
\Big( \sqrt{g_{00} /  \|\biG^{-1} \| + |\bg|^2} +
|\bg| \Big) \|\biG^{-1}\| \, .
\tag{A.4}
\ee
Thus one has the inclusion of light cones 
$\Vc_+(g) \subset \Vc_+(\eta^{c})$,
where the latter lightcone is of Minkowski type, 
\[
\Vc_+(\eta^{\, c}) = \{ (t,\bix) \in \RR^d \, | \,
t>0, \, c^2t^2 - \bix^2 > 0 \} \, , \quad c > 0 \, . \tag{A.5}
\]

\medskip 
Next, we determine the inverse metric. 
Using again the split
into time and space coordinates, we represent $g^{-1}$ also as a block matrix 
\[
g^{-1}=\left(\begin{array}{cc}
            g^{00}      & \bih\\
            \bih^T & - \biH
            \end{array}\right) \tag{A.6}
\]
and obtain by an elementary computation  
\begin{align*}
g^{00} & =(g_{00}+\langle\bg,\biG^{-1}\bg\rangle)^{-1}\ , \\
\bih & =g^{00} \, \biG^{-1}\bg  \, , \\
\biH & = \biG^{-1} - (g^{00})^{-1} \, |\bih\rangle \langle\bih|  \, . \tag{A.7}
\end{align*}
The conditions on $g$ can now also be formulated in terms of
conditions on $g^{-1}$, namely $g^{00}>0$ and
$\biH$ is to be positive definite.

\medskip
The kinetic perturbations $P$, considered in the main
text, are described by differential operators with principal
symbols $p$, which in the chosen coordinates are given by 
\[ \label{e.a.5}
 p =\left(\begin{array}{cc}
            p^{00}      & \bip \\
            \bip^T &    - \biP
            \end{array}\right) \, . \tag{A.8}
\]
Putting $\widetilde{g}_P \doteq (\eta + p)$, the  
corresponding metric $g_P$ on Minkowski space is 
given by the equation
$|\mathrm{det} g_P|^{-1/2}g_P = \widetilde{g}_P^{\, -1}$ (for $d>2$).
So our constraints on the admissible metrics 
imply that $(1 + p^{00}) > 0$ and that the matrix
$(1 + \biP)$  is positive definite. 
These conditions agree with our standing assumption, 
characterizing the principal symbols of admissible
perturbations. 

\medskip
It is apparent that any convex combination of admissible
principal symbols $p$  
is again admissible. We restrict 
the admissible symbols to compact, convex 
subsets in order to control the size of the lightcones
determined by the corresponding metrics~$g_P$ in Minkowski space. 
Given $0 < \varepsilon \leq 1$, we consider perturbations with
principal symbols satisfying 
\be \label{e.a.9}
\varepsilon \leq 1 + p^{00} \leq \varepsilon^{-1} \, , \quad 
\varepsilon \, 1 \leq
 1 + \biP  \leq \varepsilon^{-1} 1  \, . \tag{A.9} 
\ee
We also require that the length $|\bip|$ is bounded by
$\varepsilon^{-1}$. These conditions characterize 
compact convex subsets of principal symbols. Since 
$p = 0$ is contained in any set, they are also contractible. 

\medskip 
In analogy to relation \eqref{e.a.2}, one can determine now the  
momentum space light cones $\Vc_+(\widetilde{g}_P)$  
fixed by the data in relation \eqref{e.a.9}.  
By a similar computation as above
one finds that the vectors $(1, \bik)$ are contained in these cones if
$| \bik | \leq (\sqrt{2} - 1) \varepsilon^2$. Thus the cones contain the 
momentum space lightcones for the Minkowskian metric $\eta^{c(\varepsilon)}$
with velocity of light
\be
c(\varepsilon)= (\sqrt{2} + 1) \,  \varepsilon^{-2} \, . \tag{A.10}
\ee
For the dual lightcones in position space
$\{ x \in \RR^d : xp > 0, \ p \in \Vc_+(\widetilde{g}_P) \}$
we get the opposite inclusion.
Hence the metrics $g_P$ associated with the
(for the given $\varepsilon$) restricted principal symbols
$p$ are dominated on all of Minkowski space 
by the Minkowskian metric $\eta^{\, c(\varepsilon)}$.
In particular, these metrics comply with our initial
constraints on amissible metrics.
Because of the relevance of the value of the
parameter $c$ in the main text, we
denote the corresponding compact, convex and contractible 
sets of principal symbols by $\Kcc$. They increase with
increasing $c$ and exhaust the set of all admissible
principal symbols. 

\end{appendix}

\begin{acknowledgements}
We are grateful to Valter Moretti and Rainer Verch for information
on the status of the problem of Haag duality in curved spacetimes.
DB~also thanks Dorothea Bahns and the Mathematics Institute
of the University of G\"ottingen for their generous hospitality.  
\end{acknowledgements}


\begin{thebibliography}{99}

  %
\bibitem{Ar} H.~Araki, 
``A lattice of von Neumann algebras associated with the quantum
theory of a free Bose field'', J.\ Math.\ Phys.\
\textbf{4} (1963) 1343-1362.
%
    
%
\bibitem{BGP} C.~B\"ar, N.~Ginoux and F.~Pf\"affle,
  \textit{Wave Equations on Lorentzian Manifolds and Quantization}, ESI
Lect.\ Math.\ Phys., European Mathematical Society (EMS), Z\"urich (2007)
%

%
\bibitem{BP57} N.N.~Bogoliubov and O.S.~Parasiuk, 
``{\"U}ber die {M}ultiplikation der {K}ausalfunktionen in der {Q}uantentheorie 
der {F}elder'',  Acta Mathematica {\bf 97} (1957) 227--266 
%

%
\bibitem{BS59} N.N.~Bogoliubov and D.V.~Shirkov, 
\textit{Introduction to the Theory of Quantized Fields},  
Interscience Publishers (1959)
%

%
\bibitem{BF19}
D.~Buchholz and K.~Fredenhagen, ``A C*-algebraic approach
to interacting quantum field theories'',
Commun.\ Math.\ Phys.\ \textbf{377} (2020)
947-969 
%

%
\bibitem{BF20} D.~Buchholz and K.~Fredenhagen, ``From path integrals
  to dynamical algebras: a macroscopic view of quantum physics'',
  Found.\ Phys.\ \textbf{50}  (2020) 727-734
%

%
\bibitem{Dang}
N.V.\ Dang, ``Renormalization of determinant lines in quantum field theory'',
preprint \\ arxiv:1901.10542
%

%
\bibitem{Du} Michael D\"utsch, \textit{From Classical Field
  Theory to Perturbative Quantum Field Theory},
  Progr.\ Math.\ Phys.\ \textbf{74}, Birkhäuser (2019) 
%

%
\bibitem{EpGl}
H.~Epstein and V.~Glaser,
``The Role of locality in perturbation theory'', 
Annales Poincare Phys.\ Theor.\ A {\bf 19} (1973) 211-295 
%
  
%
\bibitem{GB}
J-M.~Gracia-Bondı\'ia, ``The phase of the scattering matrix'', 
Physics Letters \textbf{B 482} (2000) 315-322
%

%
\bibitem{HaKa}
  R.~Haag and D.~Kastler,
  ``An algebraic approach to quantum field theory'', 
  J.\ Math.\ Phys.\  {\bf 5} (1964) 848-861
%
  
%
\bibitem{Hoer} L. H\"ormander, ``The analysis of 
linear partial differential operators I: Distribution theory and 
Fourier analysis'', Springer 2003
%

%
\bibitem{La} E.\ Langmann, 
``Cocycles for boson and fermion Bogoliubov transformations'', 
J.\ Math.\ Phys.\ \textbf{35} (1994) 96-112
%

%
\bibitem{LoRoVe} R.~Longo, J.E.~Roberts and R.~Verch, 
  ``Charged sectors, spin and statistics in quantum field theory on curved
  spacetimes'', Rev.\ Math.\ Phys.\ \textbf{13} (2001) 125-198
%

%
\bibitem{MoVe}
  V.~Moretti and R.~Verch, private communication 
%

%
\bibitem{Re} Kasia Rejzner, \textit{Perturbative Algebraic Quantum Field
  Theory. An Introduction for Mathematicians}, Springer (2016) 
%
  
%
\bibitem{Rob}
  P.L.~Robinson, ``The bosonic Fock representation and a
  generalized Shale theorem'',
arxiv:1203.5841v1
%

%
\bibitem{Scharf-Wreszinski}
  G.~Scharf and W.F.~Wreszinski,
  ``The causal phase in quantum electrodynamics'', 
Nuovo Cimento \textbf{93 A} (1986) 1-27
%

%
\bibitem{Shale}
  D.~Shale, ``Linear symmetries of free Boson fields'',
  Trans.\ Amer.\ Math Soc.\ \textbf{103} (1962) 149-167
%
  
%
\bibitem{Wald}
R.M.~Wald,
``Existence of the S-matrix in quantum field theory in curved space-time'',
Annals of Physics \textbf{118} (1979) 490-510 
%

\end{thebibliography}
\end{document}